\documentclass{elsarticle}
\pdfoutput=1
\makeatletter
\def\ps@pprintTitle{
  \let\@oddfoot\@empty
}
\makeatother

\usepackage[margin=1in]{geometry}

\usepackage{bbm}
\usepackage{graphicx}
\usepackage{wrapfig}
\usepackage{caption}
\usepackage{subcaption}
\usepackage[section]{placeins} % Defines \FloatBarrier which does not allow
\usepackage{fancyhdr} % Required for custom headers
\usepackage{lastpage} % Required to determine the last page for the footer
\usepackage{extramarks} % Required for headers and footers

\usepackage{xcolor} % Colors
\usepackage{enumerate} % Additional options for \enumerate{}
\usepackage{paralist} % Additional options for lists; inline lists
\usepackage{amsmath, amsthm, amssymb, mathtools}
\usepackage{mathabx, pifont, stmaryrd} % More math symbols, fonts
\usepackage{etoolbox} % Boolean variables and more
\usepackage{bibentry} % Bibliography entries anywhere in doc
\makeatletter\let\saved@bibitem\@bibitem\makeatother % Do not remove if bibentry
\usepackage[colorlinks, bookmarksopen, bookmarksnumbered,
            citecolor=red,urlcolor=red]{hyperref} % Hyperlinks
\makeatletter\let\@bibitem\saved@bibitem\makeatother % Do not remove if bibentry
\usepackage{comment} % Allows multiline comments

\usepackage{multirow}

\usepackage{algorithm}
\usepackage{algorithmic}

\usepackage{optidef}
\usepackage{mathtools}

\newtheorem{lemma}{Lemma}
\newtheorem{prop}{Proposition}
\newtheorem{remark}{Remark}

\theoremstyle{definition}

\usepackage{bm} % required for: \bm{}
\usepackage{amsfonts} % required for: \mathbb{}, \mathcal{}, ...

\DeclareMathOperator*{\argmin}{arg\,min}

\usepackage{tikz}
\usepackage{pgfplots}
\usepackage{xcolor}
\usepackage{pgfplotstable, booktabs}
\pgfplotsset{compat=1.9}
\definecolor{lightergreen}{rgb}{0.8, 1.0, 0.8}
\definecolor{seagreen}{rgb}{0.180, 0.545, 0.341}

\usetikzlibrary{pgfplots.groupplots}
\usepgfplotslibrary{fillbetween}
\usetikzlibrary{calc,fit,matrix,arrows,automata,positioning,shapes}
\usetikzlibrary{arrows.meta}

\pgfplotsset{select coords between index/.style 2 args={
    x filter/.code={
        \ifnum\coordindex<#1\def\pgfmathresult{}\fi
        \ifnum\coordindex>#2\def\pgfmathresult{}\fi
    }
}}

\tikzset{
 invisible/.style={opacity=0},
 visible on/.style={alt={#1{}{invisible}}},
 alt/.code args={<#1>#2#3}{%
   \alt<#1>{\pgfkeysalso{#2}}{\pgfkeysalso{#3}}
 },
}

\usepackage{enumitem}

\begin{document}
\title{Feature tracking in physics-informed neural networks via joint
       optimization of nonlinear deformation manifolds:
       application to shocks}

\author[rvt2]{Akshay Thakur\fnref{fn1}\corref{cor1}}
\ead{athakur3@nd.edu}

\author[rvt2]{Matthew J. Zahr\fnref{fn2}}
\ead{mzahr@nd.edu}

\address[rvt2]{Department of Aerospace and Mechanical Engineering, University
               of Notre Dame, Notre Dame, IN 46556, United States}

\cortext[cor1]{Corresponding author}

\fntext[fn1]{Graduate Student, Department of Aerospace and Mechanical
             Engineering, University of Notre Dame}
\fntext[fn2]{Associate Professor, Department of Aerospace and Mechanical
             Engineering, University of Notre Dame}

\begin{keyword}
physics-informed neural networks, shock tracking, nonlinear deformation
manifolds, adaptive collocation, viscous conservation
laws
\end{keyword}

\begin{abstract}
Physics-informed neural networks (PINNs) can converge to inaccurate solutions for conservation laws whose solutions develop shocks and other localized features.
 A uniformly distributed set of collocation points undersamples such features while oversampling the remainder of the domain, causing the residual to be dominated 
 by regions in which the solution is already well approximated. We propose a feature-tracking PINN (FT-PINN) framework in which the solution network is defined on
  a fixed reference domain and composed with a diffeomorphic deformation map drawn from a parameterized nonlinear manifold of deformations. The deformation
   parameters are trained jointly with the solution-network parameters by minimizing the pulled-back conservation-law residual. The nonlinear deformation map 
   enables the collocation points to concentrate along features of essentially arbitrary geometry, including curved, oblique, and merging shocks. Because the 
   deformation is determined by the optimization, the method requires no a priori knowledge of the feature location to concentrate collocation points in its 
   vicinity. The framework is agnostic to the particular nonlinear parameterization used for the deformation map. However, nonlinear parameterizations do not, 
   in general, guarantee invertibility of the resulting map or preservation of the domain geometry. Boundary preservation is therefore imposed exactly
   through a tangential projection of the displacement, while folding of the map is discouraged by a one-sided penalty on the Jacobian determinant
   evaluated at the collocation points.
   We assess the framework on four problems involving shocks as localized features, namely the space-time viscous Burgers equation with two merging shocks,
   a decelerating Burgers shock, the one-dimensional Euler shock tube in space-time, and steady two-dimensional Euler regular shock reflection from a wall. 
   In all four cases, FT-PINN resolves the shocks at their correct locations using a limited collocation budget, whereas a vanilla PINN with an identical
    network architecture, collocation budget, and training procedure either misplaces the shocks or fails to develop them altogether.

\end{abstract}

\maketitle

\section{Introduction}
\label{sec:intro}
% !TEX root = main.tex
Physics-informed neural networks
(PINNs)~\cite{raissi2019physics,karniadakis2021physics} approximate solutions of
partial differential equations (PDEs) by training a neural network to minimize a
composite loss function. The loss comprises the residual of the governing PDE,
evaluated at a set of collocation points distributed over the computational
domain, together with penalty terms enforcing the boundary and initial
conditions. PINNs have been successfully employed in a wide range of
scientific and engineering applications~\cite{wang2023expert,wang2024respecting, wang2024piratenets,navaneeth2024physics}, including fluid mechanics~\cite{cai2021physics,mao2020physics}, solid mechanics~\cite{haghighat2021physics}, 
inverse problems~\cite{yang2021b,raissi2019physics}, and surrogate modeling~\cite{zhu2019physics,haghighat2021physics}. 

\subsection{Resolving shocks using PINNs}
\label{subsec:resolveshock}

Despite their flexibility, PINNs trained on a uniformly distributed set of
collocation points are known to converge to solutions that are inaccurate,
smeared over a wide band, or missing the feature altogether on problems whose
solutions develop shocks and other localized
features~\cite{mao2020physics,fuks2020limitations,arzani2023theory}. First, a
uniform collocation set is a poor discretization of the residual. The smooth
regions occupying most of the domain contribute nearly all of the loss and bias
the optimization toward polishing them, while the feature---narrow relative to
the point spacing---is sampled by too few points to be resolved.

Second, and more fundamentally, a hyperbolic conservation law admits genuinely
discontinuous weak solutions, so the strong form residual that drives the
physics-informed loss is not valid at the shock. For the Euler equations, the
differential form is derived under an assumption of smoothness, and consequently,
conserves entropy along streamlines, whereas the admissible solution must jump in
entropy across a shock \cite{wagenaar2023physics}. Minimizing this residual therefore causes the physics-informed network to
seek an isentropic solution to a problem that admits none, which results in an
inaccurate solution \cite{wagenaar2023physics,fuks2020limitations}. The standard remedy is to solve a nearby viscous problem
instead~\cite{LeVeque_2002}. A small artificial viscosity
$\nu$ regularizes the discontinuity into a thin internal layer that supplies the
required dissipation and recovers the entropy solution as $\nu\to0$, and this has been shown to improve solution quality for PINNs~\cite{coutinho2023physics,wagenaar2023physics,fuks2020limitations}.

A number of PINN-based remedies for this sampling deficiency have been proposed.
These include the clustering of collocation points in the vicinity of the
localized feature~\cite{mao2020physics}, the decomposition of the domain into
subregions endowed with separate
subnetworks~\cite{jagtap2020conservative,jagtap2020extended,moseley2023finite},
and error-metric-based adaptation of the point set through resampling or
gradient-based relocation of individual
points~\cite{nabian2021efficient,wu2023comprehensive,lau2023pinnacle,pacmann2024}.
Each is subject to its own limitations. The first two presuppose knowledge of the
feature location prior to training and introduce additional hyperparameters
governing the extent of the clustering or the placement and coupling of the
subdomain interfaces. The third dispenses with such prior knowledge, but incurs
additional cost during training from repeated resampling or point-wise
optimization, and represents the evolution of the collocation set as a collection
of independent per-point trajectories rather than as a single coherent
transformation of the domain.

An alternative, and the approach adopted in this work,
is to introduce a coordinate transformation that clusters the collocation points
at the feature in physical space through the transformation itself rather than through the sampling.
This strategy is well established in mesh-based discretizations and forms the basis of Arbitrary 
Lagrangian--Eulerian formulations~\cite{donea2004arbitrary,persson2009dgdeform}, 
 anisotropic mesh adaptation~\cite{huang2010adaptive}, and implicit shock tracking~
 \cite{zahr2018optimization,zahr2020implicit,huang2022robust,corrigan2019moving,dong2025rp}.
In the latter, $r$-adaptivity is performed on a finite element discretization by solving
an optimization problem that aligns element faces with solution discontinuities,
leaving the discrete solution smooth within each element.

In earlier work~\cite{THAKUR2026110419}, we employed a related approach for tackling boundary-layer problems 
with PINNs, parameterizing the deformation map as a tensor product of independent one-parameter 
nonlinear mappings along each Cartesian direction. While effective for coordinate-aligned features,
 this separable construction restricts the resulting clustering to the coordinate directions and 
 therefore cannot readily resolve features with arbitrary geometry, such as curved or 
 oblique shocks.

\subsection{Proposed approach and contributions}
\label{subsec:proposed}

This article presents a feature-tracking PINN (FT-PINN) that transplants the
$r$-adaptive, shock-fitting paradigm of mesh-based discretizations into the
collocation setting. The solution network is posed on a fixed reference domain
$\Omega_0$ and composed with a parametric diffeomorphism
$\mathcal{G}(\,\cdot\,;\boldsymbol{\tau}):\Omega_0\to\Omega$, trained jointly
with the network parameters against the pulled-back conservation-law residual,
so that a uniform reference collocation grid induces a physical point
distribution that concentrates at the shocks. Departing from the separable tensor-product
construction of our earlier work~\cite{THAKUR2026110419}, the deformation map is drawn from a
general nonlinear manifold capable of aligning with oblique, curved, or geometrically
complex features. The formulation itself is agnostic to how this map is parameterized.
The main contributions of this work are:
\begin{enumerate}[leftmargin=1.25em]
\item a shock-tracking PINN formulation in which collocation point clustering is
produced by a non-separable deformation map optimized jointly with the network
parameters, rather than by resampling or per-point optimization, and without
prior knowledge of the feature location,
\item utilization of a one-sided fold barrier on the determinant of the deformation Jacobian,
a penalty evaluated at the collocation points that discourages folding of the map
without opposing the volume contraction that produces the clustering,
\item an exactly imposed boundary-preservation condition, obtained by applying the
tangential projector to the displacement outputs from the nonlinear manifold, so
that $\mathcal{G}(\partial\Omega_0)=\partial\Omega_0$ holds for every
$\boldsymbol{\tau}$ while boundary points remain free to redistribute
along the boundary, and
\item a demonstration on four shock-dominated problems---two space-time viscous
Burgers configurations, the one-dimensional Euler shock tube in space-time, and
steady two-dimensional Euler regular shock reflection---each compared against a
vanilla PINN of identical architecture, training, and collocation budget, and
against an exact or high-resolution reference solution.
\end{enumerate}

\subsection{Paper outline}
The remainder of the paper is organized as follows.
Section~\ref{sec:goveq} introduces the system of viscous conservation laws, its
space-time form, the vanilla-PINN discretization, and the reference-domain
formulation obtained by pulling the conservation law back through a parametric
diffeomorphism $\mathcal{G}(\,\cdot\,;\boldsymbol{\tau})$.
Section~\ref{sec:optim} develops the feature-tracking PINN over this
formulation, defining the nonlinear manifold of deformations and the RBF
parameterization adopted here, the fold barrier that penalizes folding of the
deformation, the boundary-preserving projector, the induced collocation
distribution, and the resulting joint training objective.
Section~\ref{sec:solver} describes the joint training algorithm, including the
Adam-based mini-batch scheme with separate learning rates and the
identity-initialized warm start. Section~\ref{sec:num_exp} demonstrates the
framework on the four numerical experiments described above. Finally,
Section~\ref{sec:conclude} provides the conclusions.

\section{Governing equations and discretization}
\label{sec:goveq}
% !TEX root = main.tex
\subsection{Viscous conservation laws}
\label{subsec:goveq:claw}

Consider a general system of $m$ viscous conservation laws posed on a
domain $\Omega \subset \mathbb{R}^d$,
\begin{equation}
    \nabla \cdot F(U,\nabla U) = S(U,\nabla U), \qquad x \in \Omega,
    \label{eqn:genconslaw}
\end{equation}
where $U:\Omega \to \mathbb{R}^m$ is the vector of conserved variables
at location $x\in\Omega$,
$F : \mathbb{R}^m \times \mathbb{R}^{m \times d} \rightarrow
\mathbb{R}^{m\times d}$ is the flux function, and
$S:\mathbb{R}^m \times \mathbb{R}^{m \times d} \rightarrow
\mathbb{R}^{m}$ is the source term. The flux is split into
inviscid and viscous parts,
\begin{equation}
    F : (W, Q) \mapsto F_I(W) - F_V(W, Q),
\end{equation}
where $F_I:\mathbb{R}^m \to \mathbb{R}^{m\times d}$ is the inviscid
flux, $F_V:\mathbb{R}^m \times \mathbb{R}^{m\times d} \to
\mathbb{R}^{m\times d}$ is the viscous flux, and we assume the viscous
flux is linear in the second argument:
$F_V : (W, Q) \mapsto D(W) : Q$, where $D : \mathbb{R}^m \to \mathbb{R}^{m\times d\times m\times d}$ is the viscous tensor. Under this splitting,
\eqref{eqn:genconslaw} is equivalent to
\begin{equation}
    \nabla \cdot F_I(U) - \nabla \cdot F_V(U,\nabla U) = S(U,\nabla U),
    \qquad x \in \Omega,
    \label{eqn:genconslaw2}
\end{equation}
supplemented with appropriate boundary conditions on $\partial\Omega$.

\label{subsec:goveq:spacetime}%
No separate governing equations are required for unsteady problems, since
\eqref{eqn:genconslaw2} is general enough to admit them once $\Omega$ is read as a
space-time domain. Throughout, $x\in\Omega$ denotes the ambient point of the
domain. For a steady problem all $d$ of its components are spatial and we write
$x=(x_1,\dots,x_d)$. For an unsteady problem in $d-1$ spatial dimensions the
last component is temporal. We write it as $t$ rather than $x_d$, so that
$x=(x_1,\dots,x_{d-1},t)$, and the divergence
in~\eqref{eqn:genconslaw2} is understood to run over all $d$ space-time
directions. In the flux matrices that follow, columns are ordered such that
the first $d-1$ columns are spatial and the last is temporal.
For any sufficiently smooth function $W : \Omega \to \mathbb{R}^m$, we define the strong-form
residual operator associated with~\eqref{eqn:genconslaw2} as
\begin{equation}
    \mathcal{R}[W](x) \coloneqq
    \nabla \cdot F_I(W(x))
    - \nabla \cdot F_V(W(x),\nabla W(x))
    - S(W(x),\nabla W(x)),
    \label{eqn:residual}
\end{equation}
so that~\eqref{eqn:genconslaw2} becomes $\mathcal{R}[U]\equiv 0$ on
$\Omega$. This operator is the quantity whose pointwise values drive
the physics-informed loss.

\subsection{Physics-informed neural network approximation}
\label{subsec:goveq:pinn}

In a physics-informed neural network~\cite{raissi2019physics}, the
solution of~\eqref{eqn:genconslaw2} is approximated by a neural
network (NN) $\hat{U}_\theta : \Omega \to \mathbb{R}^m$ parameterized by
trainable weights $\theta\in\mathbb{R}^{N_\theta}$. The network is trained to
minimize a composite loss penalizing violation of the PDE and of the
boundary conditions. Given a finite set of collocation points
$\mathcal{X}_c = \{x^{(i)}\}_{i=1}^{N_c}\subset\Omega$ and a set of
boundary points $\mathcal{X}_b=\{x_b^{(j)}\}_{j=1}^{N_b}\subset\partial\Omega$,
the standard physics-informed loss function is
\begin{equation}
    \mathcal{L}(\theta)
    = \frac{w_\mathrm{pde}}{N_c}
      \sum_{i=1}^{N_c}
      \Vert\mathcal{R}[\hat{U}_\theta](x^{(i)})\Vert_2^2
    + \frac{w_\mathrm{bc}}{N_b}\sum_{j=1}^{N_b}
      \bigl\Vert\mathcal{B}[\hat U_\theta](x_b^{(j)}) - g(x_b^{(j)})\bigr\Vert_2^2,
    \label{eqn:pinn_loss}
\end{equation}
where $\mathcal{B}$ is a boundary operator with prescribed data $g$ and
$w_\mathrm{pde},w_\mathrm{bc}\ge 0$ are user-chosen penalty parameters. The derivatives required to evaluate
$\mathcal{R}[\hat{U}_\theta]$ in~\eqref{eqn:residual} are obtained by
reverse-mode automatic differentiation through the network. To train the
network, we seek the weights $\theta_* \in \mathbb{R}^{N_\theta}$ that solve the
unconstrained optimization problem
\begin{equation}
    \theta_* \in \argmin_{\theta\in\mathbb{R}^{N_\theta}}
    \mathcal{L}(\theta),
    \label{eqn:pinn_opt}
\end{equation}
which is typically solved by a first-order optimizer such as Adam. In this work,
$\hat{U}_\theta$ is a fully-connected neural network (FCNN) with a
random Fourier feature embedding~\cite{tancik2020fourier} of the input coordinates. The
input dimension for the network matches the space--time dimension $d$ and the output
dimension matches the conservation-law dimension $m$.

\subsection{Reference-domain formulation}
\label{subsec:goveq:refdom}

Feature-tracking frameworks \cite{zahr2020implicit,dong2025rp,huang2022robust,mirhoseini2023model}
are built around a fixed reference domain $\Omega_0\subset\mathbb{R}^d$ and a
parametric family of diffeomorphisms
$\{\mathcal{G}(\,\cdot\,;\boldsymbol{\tau})\}_{\boldsymbol{\tau}\in\mathcal{T}}$
that deform it onto the physical domain,
\begin{equation}
    \mathcal{G}(\,\cdot\,;\boldsymbol{\tau}) :
    \Omega_0 \longrightarrow \Omega,
    \qquad
    \mathcal{G} : X \mapsto \mathcal{G}(X;\boldsymbol{\tau}),
    \label{eqn:dom-map}
\end{equation}
where $\boldsymbol{\tau}\in\mathcal{T}\subset\mathbb{R}^{N_{\boldsymbol\tau}}$ is a
finite-dimensional parameter vector defining a member of the family and $\mathcal{T}$ is the set of
admissible parameter values. We refer to $X$ as the reference
coordinate and to $x=\mathcal{G}(X;\boldsymbol{\tau})$ as the associated physical
coordinate. Figure~\ref{fig:dom-map} sketches a representative domain mapping.
% Implicit shock tracking represents $\mathcal{G}$ through an $H^1$ finite element
% discretization bound to a mesh. We instead draw $\mathcal{G}$ from a nonlinear parametric family, which is specified in Section~\ref{sec:optim}.

\begin{figure}[t]
  \centering
  \begin{tikzpicture}[>={Stealth[length=2.5mm]},scale=0.95]
    % Reference domain
    \draw[thick] (0,0) rectangle (2.6,2.6);
    \node at (1.3,-0.35) {$\Omega_0$ (reference)};
    \foreach \i in {0.2,0.6,...,2.4} {
      \foreach \j in {0.2,0.6,...,2.4} {
        \fill[gray!60] (\i,\j) circle (0.5pt);
      }
    }
    \node[anchor=north east] at (2.55,2.55) {$X$};
    % Arrow
    \draw[->,very thick] (3.0,1.3) -- node[above]
      {$\mathcal{G}(\,\cdot\,;\boldsymbol{\tau})$} (5.0,1.3); 
    % Physical domain: grid built in the shock-aligned frame. Point layers run
    % parallel to the feature; their normal offsets are tightly spaced for the
    % first two layers on each side and uniform beyond, so the mesh is squeezed
    % onto the shock and relaxes to the reference spacing away from it.
    \draw[thick] (5.4,0) rectangle (8.0,2.6);
    \node at (6.7,-0.35) {$\Omega$ (physical)};
    \pgfmathsetmacro{\dirx}{1.85}  % shock direction (0.35,0.05)--(2.2,2.55)
    \pgfmathsetmacro{\diry}{2.50}
    \pgfmathsetmacro{\dnorm}{sqrt(\dirx*\dirx+\diry*\diry)}
    \pgfmathsetmacro{\tx}{\dirx/\dnorm}   % unit tangent
    \pgfmathsetmacro{\ty}{\diry/\dnorm}
    \pgfmathsetmacro{\nx}{\ty}            % unit normal
    \pgfmathsetmacro{\ny}{-\tx}
    \pgfmathsetmacro{\cx}{1.275}   % shock midpoint
    \pgfmathsetmacro{\cy}{1.30}
    % normal offsets: +-0.10,+-0.20 (squeezed), then uniform 0.42 steps
    \foreach \s in {-1.46,-1.04,-0.62,-0.20,-0.10,0,0.10,0.20,0.62,1.04,1.46} {
      \foreach \m in {-1.87,-1.53,...,1.88} {
        \pgfmathsetmacro{\xx}{\cx + \s*\nx + \m*\tx}
        \pgfmathsetmacro{\yy}{\cy + \s*\ny + \m*\ty}
        \pgfmathparse{(\xx>0.08) && (\xx<2.52) && (\yy>0.08) && (\yy<2.52) ? 1 : 0}
        \ifnum\pgfmathresult=1
          \fill[gray!80] ({5.4+\xx},\yy) circle (0.5pt);
        \fi
      }
    }
    \node[anchor=north east] at (7.95,2.55) {$x$};
    \draw[thick,red] (5.75,0.05) -- (7.6,2.55);
    % label placed outside the domain with a thin leader to the feature
    \node[red,anchor=west,font=\small] (shklbl) at (8.15,0.75) {shock};
    \draw[red,thin,->,shorten >=1pt] (shklbl.west) -- (7.06,1.82);
  \end{tikzpicture}
  \caption{Mapping between the reference domain $\Omega_0$ (left) and
  the physical domain $\Omega$ (right). The map
  $\mathcal{G}(\,\cdot\,;\boldsymbol\tau)$ sends $\partial\Omega_0$ onto
  $\partial\Omega$, so the domain geometry is unchanged, while clustering
  the image of a uniform reference grid along an oblique shock.}
  \label{fig:dom-map}
\end{figure}

The pulled-back solution induced by the mapping~\eqref{eqn:dom-map}, denoted by $\bar{U}:\Omega_0\to\mathbb{R}^m$, is defined on the reference domain as
\begin{equation}
    \bar{U}(X) =
    U\bigl(\mathcal{G}(X;\boldsymbol{\tau})\bigr),
    \qquad X\in\Omega_0.
    \label{eqn:refsol}
\end{equation}
Rewriting~\eqref{eqn:genconslaw2} in the reference domain yields
\begin{equation}
    \bar\nabla \cdot \bar{F}_I(\bar{U};G)
    - \bar\nabla \cdot \bar{F}_V(\bar{U},\bar\nabla\bar{U};G)
    = \bar{S}(\bar{U},\bar\nabla\bar{U};G),
    \qquad X\in\Omega_0,
    \label{eqn:claw-ref}
\end{equation}
with $\bar\nabla = (\partial_{X_1},\dots,\partial_{X_d})$ denoting
the reference-domain gradient. The gradient of the deformation map is
\begin{equation}
    G(X;\boldsymbol{\tau}) =
    \bar\nabla \mathcal{G}(X;\boldsymbol\tau)
    \in \mathbb{R}^{d\times d},
\end{equation}
and the reference-domain fluxes and source term are defined as
\begin{equation}
\begin{aligned}
    \bar{F}_I &: (\bar{W};G) \mapsto \det(G)\, F_I(\bar{W})\, G^{-\top}, \\[2pt]
    \bar{F}_V &: (\bar{W},\bar{Q};G) \mapsto \det(G)\, F_V\!\bigl(\bar{W},\,\bar{Q}\,G^{-1}\bigr)\,G^{-\top}, \\[2pt]
    \bar{S} &: (\bar{W},\bar{Q};G) \mapsto \det(G)\, S\!\bigl(\bar{W},\,\bar{Q}\,G^{-1}\bigr).
\end{aligned}
\label{eqn:refflux}
\end{equation}
Likewise, the physical and reference-domain gradients are related by
\begin{equation}
    \nabla U (x)\bigr|_{x=\mathcal{G}(X;\boldsymbol\tau)}
    = \bar\nabla \bar{U}(X)\, G(X;\boldsymbol{\tau})^{-1}.
    \label{eqn:gradpullback}
\end{equation}
Analogously, the residual operator admits the following
reference-domain counterpart,
\begin{equation}
    \bar{\mathcal{R}}\bigl[\bar W\bigr](X)
    \coloneqq
    \bar\nabla \cdot \bar F_I(\bar W;G)
    - \bar\nabla \cdot \bar F_V(\bar W,\bar\nabla\bar W;G)
    - \bar S(\bar W,\bar\nabla\bar W;G),
    \label{eqn:refresidual}
\end{equation}
so that the mapped conservation law~\eqref{eqn:claw-ref} is equivalent to
$\bar{\mathcal{R}}[\bar U]\equiv 0$ on $\Omega_0$.

\section{Feature-tracking PINNs via nonlinear deformation manifolds}
\label{sec:optim}
% !TEX root = main.tex
In this section, we present the feature-tracking PINN built on the
reference-domain formulation of Section~\ref{subsec:goveq:refdom}. We first
define the nonlinear manifold of deformations from which
$\mathcal{G}(\,\cdot\,;\boldsymbol{\tau})$ is drawn and specify the
parameterization used in this work. We then describe the collocation
distribution induced on the physical domain, the fold barrier that penalizes
folding of the deformation, the mechanism for constraining the deformation map
such that the domain boundaries are preserved, and the resulting training objective.

\subsection{Nonlinear manifolds of deformations}
\label{subsec:optim:manifold}

Let the reference domain be identical to the physical domain, i.e.,
$\Omega_0 = \Omega$, so that the deformation map $\mathcal{G}$ is a self-map that redistributes
points without altering the geometry, and the identity map is an admissible
member of the family. We write $\mathcal{G}$ as the identity plus a parameterized displacement
field,
\begin{equation}
    \mathcal{G}(X;\boldsymbol{\tau})
    = X + \mathbf{d}(X;\boldsymbol{\tau}),
    \qquad
    \mathbf{d}(\,\cdot\,;\boldsymbol{\tau}):\Omega_0\to\mathbb{R}^d,
    \label{eqn:disp-form}
\end{equation}
and refer to the resulting family
\begin{equation}
    \mathcal{M} \coloneqq
    \bigl\{\mathcal{G}(\,\cdot\,;\boldsymbol{\tau}) :
    \boldsymbol{\tau}\in\mathcal{T}\bigr\}
    \label{eqn:manifold}
\end{equation}
as a nonlinear manifold of deformations. Two requirements shape the choice of
$\mathbf{d}$. The displacement must be smooth enough that the second derivatives
demanded by the viscous flux exist, and it must be non-separable, so that the
deformation gradient $G=\bar\nabla\mathcal{G}$ is not restricted to diagonal form
and the reference grid can undergo arbitrary geometric transformations---rotation and shear, for example---rather than merely stretching along the
coordinate axes. The second requirement is what separates $\mathbf{d}$ from the separable
analytical form introduced in our earlier boundary-layer work~\cite{THAKUR2026110419},
where $\mathbf{d}$ is a fixed closed-form expression in a handful of scalar
parameters. Here $\boldsymbol{\tau}$ is far larger and enters nonlinearly, but is
trained against the mapped residual alongside the solution-network weights
exactly as before.

\subsection{Radial basis function-based parameterization}
\label{subsec:optim:rbf}

Throughout this work we represent the displacement by a sum of $N_k$ isotropic
Gaussian radial basis functions,
\begin{equation}
    \mathbf{d}(X;\boldsymbol{\tau})
    = \sum_{k=1}^{N_k} \mathbf{w}_k \,
      \exp\!\left(-\frac{\lVert X - \mathbf{c}_k\rVert_2^2}
                        {2\,\sigma_k^{2}}\right),
    \label{eqn:rbf}
\end{equation}
where $\mathbf{c}_k\in\Omega_0$ is the center of the $k$th kernel,
$\sigma_k>0$ its bandwidth, and $\mathbf{w}_k\in\mathbb{R}^d$ a vector weight
that sets both the magnitude and the direction of the displacement it
induces. The trainable parameter vector collects all three groups,
\begin{equation}
    \boldsymbol{\tau} =
    \bigl\{\mathbf{w}_k,\ \mathbf{c}_k,\ \log\sigma_k\bigr\}_{k=1}^{N_k}
    \in\mathbb{R}^{N_{\boldsymbol\tau}},
    \qquad
    N_{\boldsymbol\tau} = N_k\,(2d+1),
    \label{eqn:tau}
\end{equation}
where the bandwidths are parameterized logarithmically so that $\sigma_k>0$ for
every value of the unconstrained optimization variable. The centers are
initialized on a uniform tensor-product grid over $\Omega_0$ and the bandwidths
at the center spacing, but both are free to move during training.

Three properties of this parameterization deserve emphasis. First, the weights
are initialized to zero, $\mathbf{w}_k^{(0)} = \mathbf{0}$, so that
$\mathcal{G} = \mathrm{id}$, $G = I$, and $\det G = 1$ at initialization: the
FT-PINN begins as a vanilla PINN on a uniform grid and deforms away from it only
as the residual demands. Second, and in contrast to the tensor-product
analytical map, the displacement is not separable. The deformation gradient is non-diagonal and varies across
the domain, so resolution can be aligned with features of arbitrary geometry ---
oblique and curved shocks, for example --- rather than only with the coordinate
axes. Third, the representation is deliberately lightweight.
$N_{\boldsymbol\tau}$ scales linearly in $N_k$ with a small constant.
The price of a learned manifold, whatever its parameterization, is that none of
the structural properties hold automatically, i.e., the map may fold, and it may push
points out of the domain. The next two subsections address both.
\begin{remark}
$G$ is evaluated by automatic differentiation through $\mathcal{G}$, and
$\det G$ and $G^{-1}$ follow by standard linear algebra; these quantities are
required inside the pulled-back fluxes~\eqref{eqn:refflux}.

\end{remark}

\begin{remark}[Other admissible parameterizations]
\label{rem:other-params}
Nothing in the formulation is tied to the employed Gaussian radial-basis representation. Within
the RBF family, any smooth radial function $\phi(r)$---for instance the inverse
multiquadric---may be substituted into~\eqref{eqn:rbf} unchanged. More
substantially, the displacement may equally be taken to be a
fully-connected neural network (FCNN),
\begin{equation}
    \mathbf{d}(X;\boldsymbol{\tau}) = \mathrm{NN}_{\boldsymbol{\tau}}(X),
    \label{eqn:mlp-disp}
\end{equation}
with a suitable activation. This is a strictly more
expressive family, and the entire construction that follows---the fold
barrier~\eqref{eqn:foldbarrier}, the boundary-preserving
projector~\eqref{eqn:tangproj}, the induced collocation
distribution~\eqref{eqn:induced-colloc}, and the joint
objective~\eqref{eqn:refloss}---remains unchanged.
\end{remark}

\subsection{Induced collocation distribution}
\label{subsec:optim:colloc}

Let $\mathcal{X}_c^0 = \{X^{(i)}\}_{i=1}^{N_c}\subset\Omega_0$ denote a fixed set of collocation points defined on 
the reference domain, which, throughout this work, is chosen as a uniform tensor-product grid. 
The corresponding set of collocation points in the physical domain is then given by
\begin{equation}
    \mathcal{X}_c(\boldsymbol{\tau})
    \coloneqq \mathcal{G}(\mathcal{X}_c^0;\boldsymbol{\tau})
    = \bigl\{\mathcal{G}(X^{(i)};\boldsymbol{\tau})\bigr\}_{i=1}^{N_c}.
    \label{eqn:induced-colloc}
\end{equation}
Note that the cardinality of $\mathcal{X}_c(\boldsymbol{\tau})$ is the same as that of
$\mathcal{X}_c^0$ because $\mathcal{G}$ is injective and the reference grid $\mathcal{X}_c^0$ is fixed.
Because the physical collocation points are generated by mapping a fixed
reference-domain grid through $\mathcal{G}(X;\boldsymbol{\tau})$, optimizing
$\boldsymbol{\tau}$ with respect to the reference-domain residual automatically
induces an adaptive distribution of collocation points, without using residual-based
adaptive sampling~\cite{wu2023comprehensive}, NTK-driven batch-adaptive
sampling~\cite{lau2023pinnacle}, or gradient-ascent updates of individual points
based on their pointwise squared residuals~\cite{pacmann2024}. The per-iteration
cost of the forward and backward passes scales linearly with $N_c$, as for a
vanilla PINN. Relative to a vanilla PINN, the additional cost is the evaluation
of $\mathcal{G}$, $G$, $\det G$, and $G^{-1}$ at the fixed reference points and
the gradient of the loss with respect to $\boldsymbol{\tau}$.
In contrast to the aforementioned adaptive-sampling strategies, the proposed 
tracking mechanism is parameterized by the low-dimensional vector $\boldsymbol{\tau}$, and does not require 
independent tracking of individual collocation points or batch-wise point selection.

\subsection{Keeping the map a diffeomorphism: the fold barrier}
\label{subsec:optim:fold}

The reference-domain formulation of Section~\ref{subsec:goveq:refdom} is
predicated on $\mathcal{G}(\,\cdot\,;\boldsymbol\tau)$ being a diffeomorphism,
which requires $\det G > 0$ on $\Omega_0$. This holds at initialization ($\det G = 1$) and for sufficiently small weights,
but nothing in the training objective keeps it true. As the centers migrate to
cluster resolution, $\det G$ can be driven to zero and through it, at which point
the map folds. It then ceases to be invertible, $G^{-1}$
in~\eqref{eqn:refflux} blows up, and the correspondence between the physical and
reference problems is lost.

We prevent this by using a one-sided barrier evaluated on the reference collocation
grid,
\begin{equation}
    \mathcal{L}_{\det}(\boldsymbol{\tau})
    = \frac{1}{N_c}\sum_{i=1}^{N_c}
      \Bigl[
        \max\bigl(0,\ \epsilon_{\det} - \det G(X^{(i)};\boldsymbol{\tau})\bigr)
      \Bigr]^2,
    \qquad \epsilon_{\det} = 10^{-3},
    \label{eqn:foldbarrier}
\end{equation}
which is added to the training objective with a penalty parameter $w_{\det}$.
Because it is one-sided, it is exactly zero
whenever $\det G \ge \epsilon_{\det}$, and it activates only as $\det G$
approaches zero from above, where it pushes the Jacobian determinant back above
the threshold and thereby prevents the sign change that constitutes folding. It
says nothing about $\det G$ being merely small and positive, and so it
does not oppose the desired behavior of the deformation map to sharply contract the volume around
the feature, which is precisely how a uniform set of reference points can be packed into a small region encompassing the feature. 
The barrier costs essentially nothing, since $\det G$ is already
assembled for the mapped fluxes~\eqref{eqn:refflux} and~\eqref{eqn:foldbarrier}
reuses it.

\subsection{Boundary-preservation constraint}
\label{subsec:optim:bdy}

The deformation map must not alter the domain, or mathematically,
$\mathcal{G}(\partial\Omega_0) = \partial\Omega_0$.  On a boundary face with outward unit normal $\mathbf{n}$, this amounts to requiring
that the displacement have no normal component,
$\mathbf{n}\cdot\mathbf{d}=0$. This can be enforced by applying the tangential
projector to the raw RBF displacement,
\begin{equation}
    \mathbf{d} \;\longmapsto\; P\,\mathbf{d},
    \qquad
    P = I - \mathbf{n}\mathbf{n}^{\!\top},
    \label{eqn:tangproj}
\end{equation}
which is the orthogonal projection onto the tangent plane ($P\mathbf{n}=0$), i.e., it removes the normal part and leaves the tangential
displacement untouched. The projection~\eqref{eqn:tangproj} is applied to $\mathbf{d}$ as a mask, so the constraint holds
exactly for every $\boldsymbol{\tau}$ and requires no penalty parameter to tune.

For the axis-aligned box $\Omega_0 = \prod_{\ell=1}^{d}[a_\ell,b_\ell]$ used
throughout this work, every face normal is $\pm\mathbf{e}_\ell$, so
$P = I - \mathbf{e}_\ell\mathbf{e}_\ell^{\!\top}$ acts only on the $\ell$-th
component and the projector reduces to a diagonal mask $M = \mathrm{diag}(m_1,\dots,m_d)$, which leads to the deformation map taking the form,
\begin{equation}
    \mathcal{G}(X;\boldsymbol\tau) = X + M(X)\,\mathbf{d}(X;\boldsymbol\tau).
    \label{eqn:boxmask}
\end{equation} 
To preserve the boundary of $\Omega_0$, it suffices for each mask component $m_\ell$ to vanish on the two faces orthogonal to the $\ell$-th coordinate direction, namely, at $X_\ell=a_\ell$ and $X_\ell=b_\ell$. We therefore define
\begin{equation}
m_\ell(X)
=
\frac{(X_\ell-a_\ell)(b_\ell-X_\ell)}
{\left(\frac{b_\ell-a_\ell}{2}\right)^2}
\in [0,1],
\label{eqn:mask}
\end{equation}
which satisfies $m_\ell(X)=0$ on both boundary faces and attains its maximum value of one at the midpoint of the interval $[a_\ell,b_\ell]$.

\subsection{Training objective on the reference domain}
\label{subsec:optim:loss}
With $\mathcal{X}_b^0=\{X_b^{(j)}\}_{j=1}^{N_b}\subset\partial\Omega_0$ a fixed set
of reference boundary points, the feature-tracking PINN loss is
\begin{equation}
    \bar{\mathcal{L}}(\theta,\boldsymbol{\tau})
    = \frac{w_\mathrm{pde}}{N_c}\sum_{i=1}^{N_c}
      \bigl\Vert
      \bar{\mathcal{R}}\bigl[\hat{\bar U}_\theta\bigr](X^{(i)})
      \bigr\Vert_2^2
    + \frac{w_\mathrm{bc}}{N_b}\sum_{j=1}^{N_b}
      \bigl\Vert
      \bar{\mathcal{B}}\bigl[\hat{\bar U}_\theta\bigr](X_b^{(j)}) - g\bigl(\mathcal{G}(X_b^{(j)};\boldsymbol\tau)\bigr)
      \bigr\Vert_2^2
     + w_{\det}\,\mathcal{L}_{\det}(\boldsymbol{\tau}),
    \label{eqn:refloss}
\end{equation}
where $\hat{\bar U}_\theta:\Omega_0\to\mathbb{R}^m$ is the FCNN evaluated on
reference coordinates and $\mathcal{L}_{\det}$ is the fold
barrier~\eqref{eqn:foldbarrier}. The operator $\bar{\mathcal{B}}$ is the
reference-domain counterpart of the boundary operator $\mathcal{B}$
in~\eqref{eqn:pinn_loss}: for Dirichlet conditions it reduces to evaluation of
$\hat{\bar U}_\theta$ at the reference boundary point, while for conditions
involving derivatives, the physical gradient is recovered from the reference
gradient through~\eqref{eqn:gradpullback}. Note that the boundary data are evaluated at
the physical image $\mathcal{G}(X_b^{(j)};\boldsymbol\tau)$ of each
reference boundary point, which is what allows boundary points to slide while
still being prescribed the correct data. As an example, for the two-state inlet profile of
Section~\ref{subsec:numexp:rr}, a point that slides across the
shock foot correctly switches from the pre- to the post-shock state. The joint
training problem is
\begin{equation}
    (\theta_*,\boldsymbol{\tau}_*)
    \in \argmin_{\theta\in\mathbb{R}^{N_\theta},\,
                  \boldsymbol\tau\in\mathbb{R}^{N_{\boldsymbol\tau}}}
    \bar{\mathcal{L}}(\theta,\boldsymbol{\tau}).
    \label{eqn:jointopt}
\end{equation}
Since $\mathcal{X}_c^0$ remains fixed throughout training, 
no explicit resampling of collocation points is required. Nevertheless, 
the corresponding physical-domain collocation set $\mathcal{X}_c(\boldsymbol{\tau})$ in~\eqref{eqn:induced-colloc} 
undergoes dynamic evolution with changes in $\boldsymbol{\tau}$, allowing the collocation points to 
adaptively track the relevant solution features.

\begin{remark}[Reduction to vanilla PINN]
\label{rem:vanilla-reduction}
When $\boldsymbol{\tau}$ is held fixed, \eqref{eqn:jointopt} reduces to a
standard PINN training problem~\eqref{eqn:pinn_opt} posed on a nonlinearly
deformed collocation set. In particular, holding
$\boldsymbol{\tau}$ at its zero-weight initialization recovers the identity map
and hence exactly the vanilla PINN baseline used for comparison throughout
Section~\ref{sec:num_exp}.
\end{remark}

\section{Joint training of network weights and deformation parameters}
\label{sec:solver}
% !TEX root = main.tex
The non-convex optimization problem defined in Eq.~\eqref{eqn:jointopt} is characterized by the interplay 
between the high-dimensional network parameters $\theta$ and the
deformation parameters $\boldsymbol{\tau}$. We solve this joint optimization problem using Adam~\cite{kingma2014adam} within a mini-batch training framework, treating $\theta$ and $\boldsymbol{\tau}$ as distinct 
parameter groups with  independently scheduled learning rates, denoted by $\eta_\theta$ and $\eta_{\boldsymbol{\tau}}$, respectively.
The reference-domain loss in Eq.~\eqref{eqn:refloss} is computed by making use of automatic
differentiation for the computation of the reference-domain fluxes and source terms, and the gradients with respect to both $\theta$ and
$\boldsymbol{\tau}$ are obtained by a single reverse-mode pass. Since the residual of an untrained network carries no
localized structure for the map to track, we freeze $\boldsymbol\tau$ at the
identity for the first $N_{\mathrm{warm}}$ of the $N_{\mathrm{Adam}}$ total epochs and train $\theta$ alone,
then unfreeze it and solve the joint problem~\eqref{eqn:jointopt} for the
remainder. The Adam moment buffers, the step counter, and the cosine
learning-rate schedule are re-initialized when $\boldsymbol\tau$ is unfrozen,
so that the warm-up and joint phases each run their own cosine decay over
$N_{\mathrm{warm}}$ and $N_{\mathrm{Adam}}-N_{\mathrm{warm}}$ epochs,
respectively. This training
protocol is used consistently across all experiments in
Section~\ref{sec:num_exp}.\par

The parameter count of the deformation field, $N_{\boldsymbol\tau} = N_k(2d+1)$
from~\eqref{eqn:tau}, is small for the two-dimensional problems considered here
with $N_k \le 100$ kernels, $N_{\boldsymbol\tau}\le 500$, against
$N_\theta\sim10^5$ for the solution network. As for a vanilla PINN, the per-iteration cost scales with the number of
collocation points in a batch, $N_B$. The additional overhead of FT-PINN, which scales with $N_k N_B$, comes
from evaluating the RBF sum~\eqref{eqn:rbf} and its derivatives at the batch
points and from assembling $\det G$ and $G^{-1}$
at each point. Both are modest relative to the
FCNN forward and backward passes, and the fold barrier reuses the already
assembled $\det G$.

\begin{algorithm}[H]
\caption{Joint training of the feature-tracking PINN}
\label{alg:trainer}
\begin{algorithmic}[1]
\REQUIRE{Reference grid $\mathcal{X}_c^0\subset\Omega_0$, boundary points
$\mathcal{X}_b^0\subset\partial\Omega_0$, initial $\theta^{(0)}$ and
$\mathbf{w}_k^{(0)}=\mathbf{0}$ (identity map), horizons
$N_{\mathrm{warm}}, N_{\mathrm{Adam}}$, rates
$\eta_\theta,\eta_{\boldsymbol\tau}$, batch size $N_B$, weights
$w_\mathrm{pde}, w_\mathrm{bc}, w_{\det}$.}
\ENSURE{Trained $(\theta_*,\boldsymbol{\tau}_*)$.}
\STATE $(\theta_*,\boldsymbol{\tau}_*) \leftarrow
        (\theta^{(0)},\boldsymbol{\tau}^{(0)})$.
\FOR{$n = 1,\dots,N_{\mathrm{Adam}}$}
  \STATE \textbf{if} $n \le N_{\mathrm{warm}}$ \textbf{then} freeze
         $\boldsymbol{\tau}$ (train as vanilla PINN) \textbf{else} unfreeze
         $\boldsymbol{\tau}$.
  \STATE \textbf{if} $n \in \{1,\,N_{\mathrm{warm}}+1\}$ \textbf{then} set the
         step counter $k \leftarrow 0$, initialize the moment buffers
         $(m_p,v_p) \leftarrow (0,0)$ for each active parameter group $p$, and
         restart the cosine learning-rate schedule.
  \STATE Shuffle $\mathcal{X}_c^0$ into mini-batches of size $N_B$.
  \FORALL{mini-batches $\mathcal{N}$}
    \STATE Evaluate $\bar{\mathcal{L}}_\mathcal{N}$,
           $\nabla_{\theta}\bar{\mathcal{L}}_\mathcal{N}$, and
           $\nabla_{\boldsymbol{\tau}}\bar{\mathcal{L}}_\mathcal{N}$
           from~\eqref{eqn:refloss}.
    \STATE $k \leftarrow k+1$.
    \STATE For each active parameter group $p\in\{\theta_*,\boldsymbol{\tau}_*\}$ with gradient $g_p=\nabla_p\bar{\mathcal{L}}_\mathcal{N}$, moment buffers $(m_p,v_p)$, and rate $\eta_p\in\{\eta_\theta,\eta_{\boldsymbol\tau}\}$, apply the Adam~\cite{kingma2014adam} update
    \[
      m_p \leftarrow \beta_1 m_p + (1-\beta_1)\,g_p, \quad
      v_p \leftarrow \beta_2 v_p + (1-\beta_2)\,g_p\odot g_p, \quad
      p \leftarrow p - \eta_p\,\frac{m_p/(1-\beta_1^{k})}{\sqrt{v_p/(1-\beta_2^{k})}+\varepsilon},
    \]
    with decay rates $(\beta_1,\beta_2)=(0.9,0.999)$ and $\varepsilon=10^{-8}$. The buffers $(m_p,v_p)$ have the same dimension as $p$ and are maintained separately for each parameter group.
  \ENDFOR
  \STATE Update $\eta_\theta,\eta_{\boldsymbol\tau}$ by one step of the cosine decay learning-rate schedule.
\ENDFOR
\RETURN $(\theta_*,\boldsymbol{\tau}_*)$.
\end{algorithmic}
\end{algorithm}

\section{Numerical experiments}
\label{sec:num_exp}
% !TEX root = main.tex
In this section we benchmark the proposed feature-tracking PINN framework on
four problems whose solutions contain shocks. In each experiment we compare the
following two configurations:
\begin{description}[leftmargin=1.25em,itemsep=1pt]
  \item[FT-PINN:] the proposed feature-tracking PINN, with a deformation map based on a mixture of Gaussian RBFs trained jointly with the FCNN using Algorithm~\ref{alg:trainer}.
  \item[Vanilla PINN:] a PINN with no deformation map, trained on a uniform
        collocation grid in $\Omega$ with otherwise identical architecture,
        optimizer, collocation budget, and training horizon.
\end{description}

\paragraph{On the role of viscosity and the assessment of accuracy}
All the problems below are solved as a viscous conservation law, where a viscous flux 
with a diffusion coefficient $\nu$ acts on the conserved variables. The role of $\nu$ differs 
depending on the problem. For the problems involving the Burgers equation in Sections~\ref{subsec:numexp:burgst} 
and~\ref{subsec:numexp:burgdec}, it is a physical property of the governing equation itself, 
whereas for the problems based on the Euler equations in Sections~\ref{subsec:numexp:sst} and~\ref{subsec:numexp:rr}, 
the underlying physical model is inviscid with true discontinuities, 
and $\nu$ is introduced purely as an artificial viscosity (regularization) 
so that pointwise derivatives and the strong-form residual remain well-defined. 
In both cases, the solution features a smooth internal layer of width $O(\nu)$ 
instead of a jump, and the exact location of this layer is not known beforehand. Furthermore,
the use of viscosity affects the choice of error measure based on the reference solution. 
For the merging-shock problem of Section~\ref{subsec:numexp:burgst} no analytical solution is known, and the
reference is obtained by solving the same viscous equation with the same $\nu$
using a high-resolution finite volume scheme. The
remaining three admit exact inviscid solutions, which are discontinuous where
the computed profile is a layer of finite thickness, the two agreeing only in
the limit $\nu\to0$. A pointwise error taken against such a reference is
floored by the regularization in two ways. Away from the feature the viscous
solution differs from the inviscid one by $O(\nu)$, so no approximation of the
viscous equation can be closer than that to the reference. Across the feature
the layer of width $O(\nu)$ is measured against a jump, which produces an
$O(1)$ spike of width $O(\nu)$ for any solution of the viscous equation,
however accurate. Such an error therefore reports the regularization rather
than the quality of the approximation once the latter is resolved to the level
$\nu$. Accuracy in Sections~\ref{subsec:numexp:burgdec}, \ref{subsec:numexp:sst}
and~\ref{subsec:numexp:rr} is therefore assessed primarily through solution
profiles across the feature, which expose the jump amplitude, the layer width,
and the feature location directly, together with contours of the full field
against the reference. For the decelerating shock of
Section~\ref{subsec:numexp:burgdec} we additionally report the pointwise error
on a logarithmic scale with the level $\nu$ marked as a horizontal line, so
that the floor set by the regularization is visible alongside the error itself.\par

For each experiment we also show the uniform reference collocation grid
$\mathcal{X}_c^0$ alongside its image
$\mathcal{X}_c(\boldsymbol{\tau}_*)=\mathcal{G}(\mathcal{X}_c^0;\boldsymbol{\tau}_*)$
under the trained map to illustrate how the deformation concentrates points on the feature.

\paragraph{Common settings}
Unless stated otherwise, the solution network is a 5-layer FCNN of width 128
with GELU activations and a random Fourier feature
embedding~\cite{tancik2020fourier} of the input coordinates with
$n_{\mathrm{Fourier}}$ random Fourier features whose frequencies are drawn
from a zero-mean Gaussian of standard deviation $\sigma_{\mathrm{Fourier}}$,
the displacement field is a mixture of Gaussian RBFs~\eqref{eqn:rbf} with
zero-initialized weights and initial centers on a uniform grid, and training follows
Algorithm~\ref{alg:trainer} with Adam, $\eta_\theta=\eta_{\boldsymbol\tau}=10^{-3}$,
and cosine decay learning rate scheduler. Problem-specific settings are collected in
Table~\ref{tab:hyper}.

\begin{table}[h]
  \centering
  \caption{Problem and training parameters for the four numerical experiments.
  Grid size is the number of interior tensor-product stations per direction,
  so $N_c$, the number of interior collocation points (faces excluded), is
  its product; $N_b$ is the
  number of boundary points on which the boundary loss is evaluated, so faces
  that carry no condition (the outflow face $t=T$ of the merging-shock problem,
  all but $t=0$ for the shock tube, and $x_2=0$, $x_1=2$ for the regular
  reflection) contribute to neither. The second vanilla PINN of
  Section~\ref{subsec:numexp:rr} replaces the interior grid by $N_c=12{,}000$
  random points with $N_B=6000$ and is otherwise identical to the last
  column.}
  \label{tab:hyper}
  \footnotesize
  \begin{tabular}{lcccc}
    \toprule
     & Burgers (merging) & Burgers (decel.) & Euler shock tube & Regular reflection \\
     & \S\ref{subsec:numexp:burgst} & \S\ref{subsec:numexp:burgdec}
     & \S\ref{subsec:numexp:sst} & \S\ref{subsec:numexp:rr} \\
    \midrule
    domain $\Omega$        & $[-1,1]\times[0,1]$ & $[-1,1]\times[0,1]$
                           & $[0,1]\times[0,0.2]$ & $[0,2]\times[0,1]$ \\
    $m$ (state size)       & $1$ & $1$ & $3$ & $4$ \\
    $\nu$                  & $10^{-3}$ & $10^{-3}$ & $8\times10^{-4}$ & $7.5\times10^{-4}$ \\
    Grid size              & $60\times50$ & $60\times50$ & $70\times50$ & $98\times58$ \\
    $N_c$                  & $3000$ & $3000$ & $3500$ & $5684$ \\
    $N_b$                  & $166$ & $228$ & $72$ & $160$ \\
    batch size $N_B$       & $1500$ & $1500$ & $1750$ & $3000$ \\
    $N_{\mathrm{Adam}}$    & $10^4$ & $10^4$ & $2\times10^4$ & $2\times10^4$ \\
    $N_{\mathrm{warm}}$    & $1000$ & $1000$ & $2000$ & $2000$ \\
    $N_k$                  & $81$ & $81$ & $81$ & $100$ \\
    $N_{\boldsymbol\tau}$  & $405$ & $405$ & $405$ & $500$ \\
    $n_{\mathrm{Fourier}}$ & $128$ & $128$ & $64$ & $64$ \\
    $\sigma_{\mathrm{Fourier}}$ & $5$ & $5$ & $10$ & $5$ \\
    $w_{\det}$             & $0.8$ & $100$ & $0.1$ & $0.1$ \\
    $w_\mathrm{pde}$       & $1$ & $1$ & $1$ & $1$ \\
    $w_\mathrm{bc}$        & $1$ & $10$ & $10$ & $10$ \\
    \bottomrule
  \end{tabular}
\end{table}

% =====================================================================
\subsection{Space-time viscous Burgers with merging shocks}
\label{subsec:numexp:burgst}

\paragraph{Setup}
Consider the one-dimensional scalar viscous Burgers equation solved in a
space-time domain $\Omega=[-1,1]\times[0,1]$ with coordinates $(x_1,t)$, following Section~\ref{subsec:goveq:spacetime}. 
In the formalism of \eqref{eqn:genconslaw2}, the state, fluxes, and source are
\begin{equation}
    U = q,\quad
    F_I(U) = \begin{bmatrix}\tfrac{1}{2}q^2 & q\end{bmatrix},\qquad
    F_V(U,\nabla U) = \begin{bmatrix}\nu\, \partial_{x_1} q & 0\end{bmatrix},\qquad
    S = 0,
    \label{eqn:burgst-flux}
\end{equation}
with $\nu = 10^{-3}$. The initial condition is the following linear combination of two Gaussian bumps,
\begin{equation}
    q(x_1,0) = q_0(x_1) = 1.2\,e^{-(x_1+0.5)^2/0.025} - e^{-(x_1-0.5)^2/0.025},
    \label{eqn:burgst-ic}
\end{equation}
whose compressive regions steepen into two shocks that propagate toward one
another and eventually merge into a single shock. Homogeneous Dirichlet conditions, $q(\pm1,t)=0$,
are imposed at both spatial boundaries, and the initial condition is prescribed on the face $t=0$.
This problem provides a challenging test of the deformation map, as it must resolve
features that are both moving and topologically changing.

\paragraph{Reference solution}
The reference solution is obtained by time marching the viscous equation with a second order
cell-centered finite volume method (FVM) on a uniform grid of $3000$ cells in $x_1$
($\Delta x_1 = 2/3000 \approx 6.7\times10^{-4}$). The inviscid flux uses MUSCL linear reconstruction
with the monotonized-central slope limiter and an exact Riemann solver at the cell faces, the viscous
flux is discretized centrally, and time integration uses the explicit third order strong
stability-preserving Runge--Kutta scheme \cite{gottlieb2001strong} with a fixed step
$\Delta t \approx 7.9\times10^{-5}$ for a total of $12{,}600$ steps to reach $t=1$. The solution is stored every $\Delta t=0.01$, and
the space-time field used for the contour comparison is obtained by linear interpolation of these
snapshots.

\paragraph{Results}
Figure~\ref{fig:burgst_contours} shows a comparison of solution contours over the space-time domains
for both vanilla and FT-PINN against the finite volume reference. The vanilla PINN
not only smears both fronts over a wide band but fails to capture the post-merge shock,
while FT-PINN reproduces the reference structure, including the merger.
Furthermore, Figure~\ref{fig:burgst_slices} compares solution profiles at $t=0.1,0.3,0.6,0.9$.
At every time instance, the vanilla PINN clips the amplitude of shocks and broadens
them, whereas the FT-PINN profile remains superposed on the reference. Figure~\ref{fig:burgst_points} shows the induced
collocation distribution. Points leave the uniform lattice and gather along both
space-time shock trajectories, with the dense concentration where the shocks are located.

\begin{figure}[h]
  \centering
  \resizebox{0.95\textwidth}{!}{\begin{tikzpicture}
\begin{axis}[
xmin=-1,
xmax=1,
ymin=0,
ymax=1,
xlabel=$x$,
ylabel=$t$,
xtick={-1, 0, 1},
ytick={0, 0.5, 1},
width=0.30\textwidth,
height=4.2cm,
name=burgstc0]
\addplot []
graphics [xmin=-1,xmax=1,ymin=0,ymax=1] { 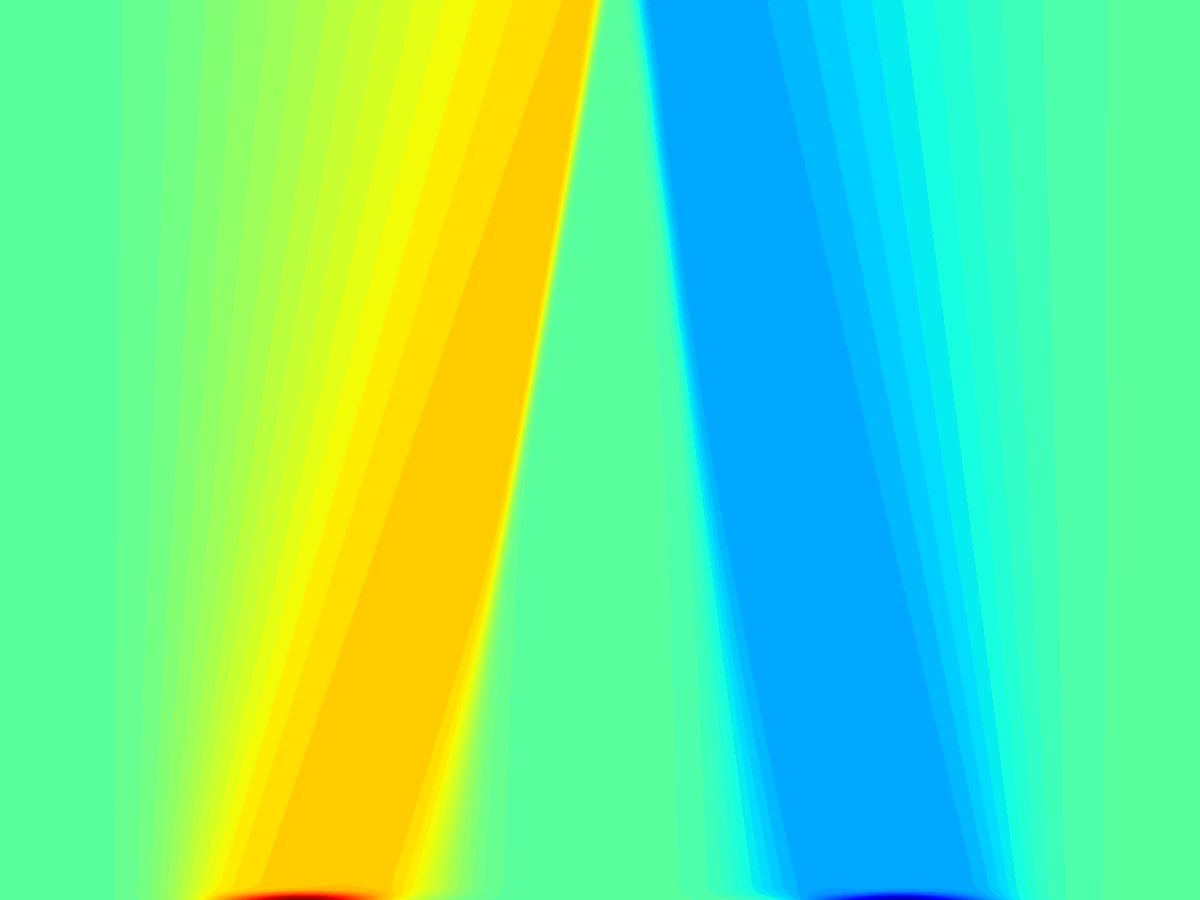};

\end{axis}
\begin{axis}[
xmin=-1,
xmax=1,
ymin=0,
ymax=1,
xlabel=$x$,
ylabel= ,
xtick={-1, 0, 1},
ytick={0, 0.5, 1},
width=0.30\textwidth,
height=4.2cm,
name=burgstc1,
anchor=west,
at={(burgstc0.east)},
xshift=0.85cm]
\addplot []
graphics [xmin=-1,xmax=1,ymin=0,ymax=1] { 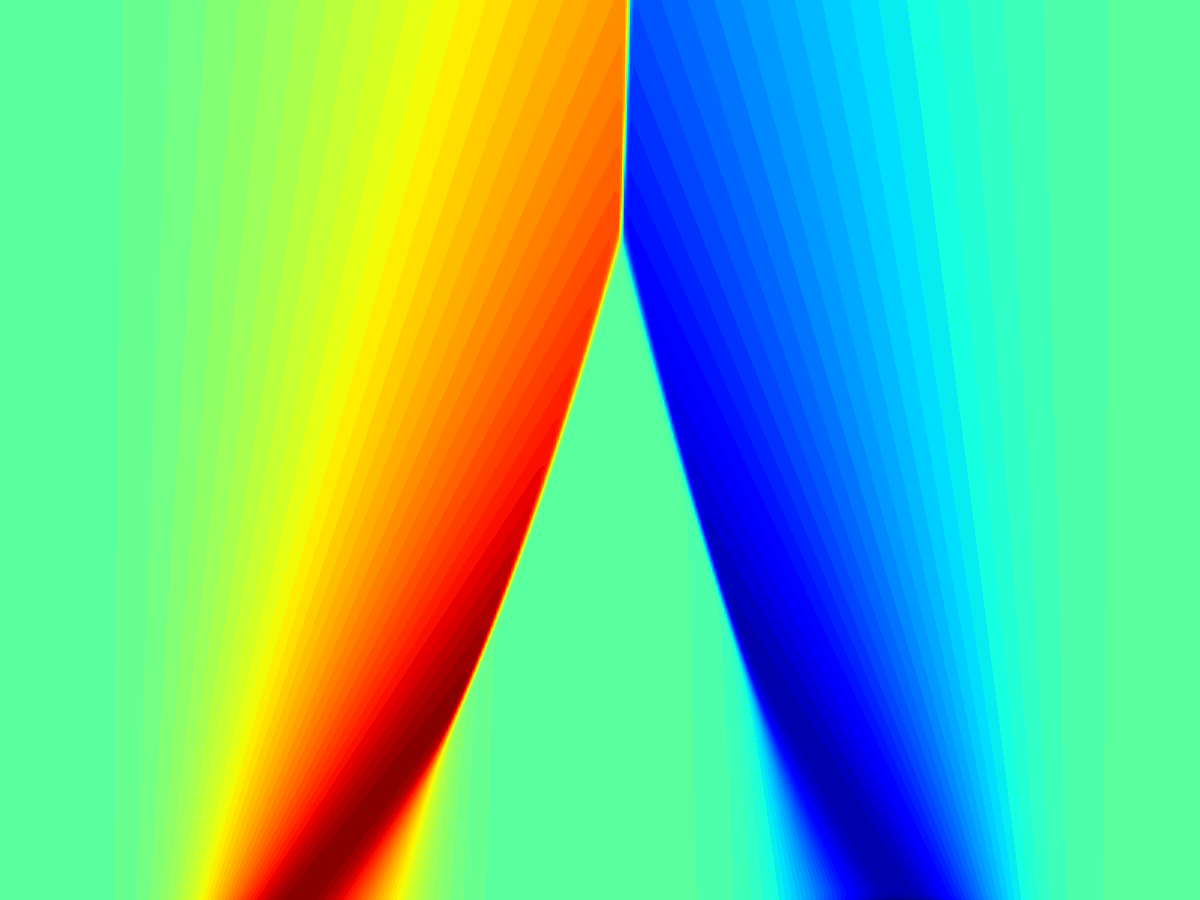};

\end{axis}
\begin{axis}[
xmin=-1,
xmax=1,
ymin=0,
ymax=1,
xlabel=$x$,
ylabel= ,
xtick={-1, 0, 1},
ytick={0, 0.5, 1},
width=0.30\textwidth,
height=4.2cm,
name=burgstc2,
anchor=west,
at={(burgstc1.east)},
xshift=0.85cm]
\addplot []
graphics [xmin=-1,xmax=1,ymin=0,ymax=1] { 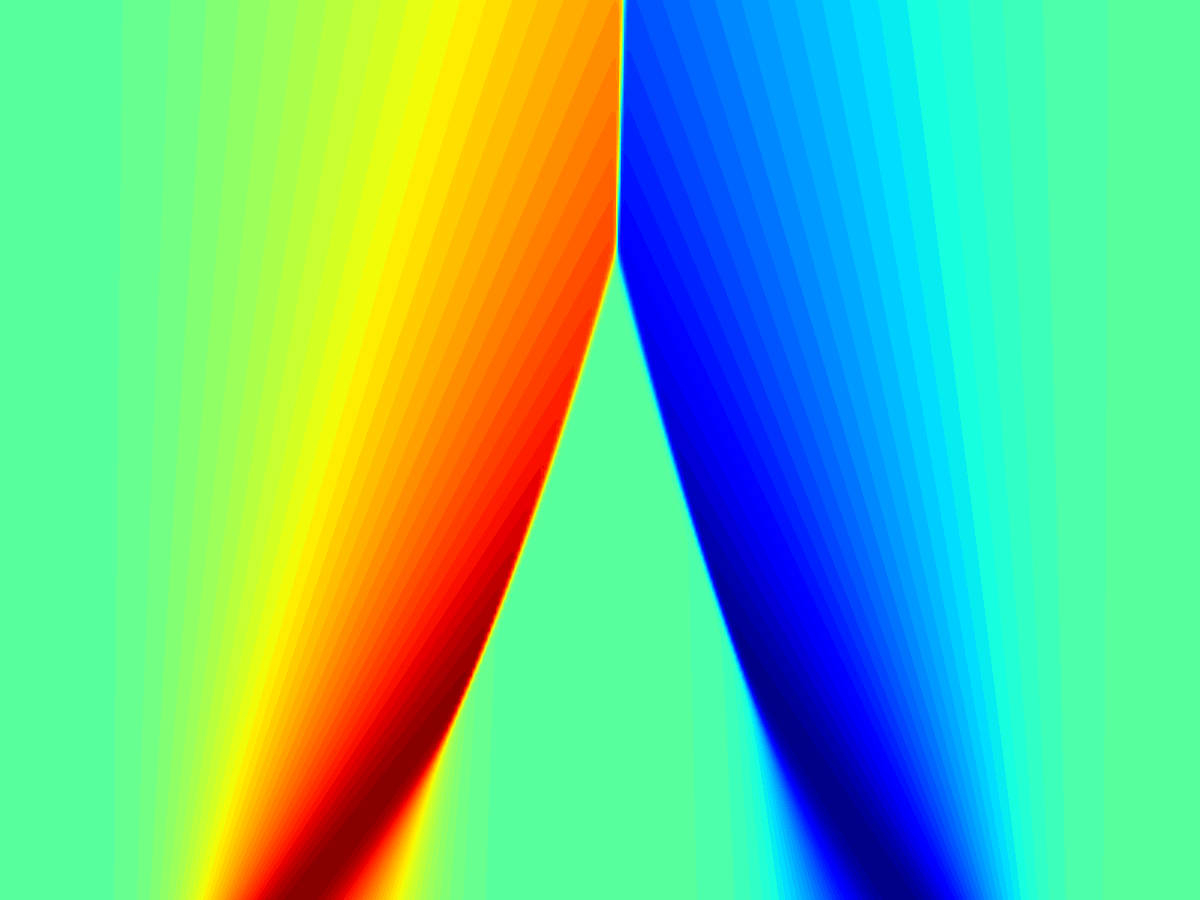};

\end{axis}
\node[anchor=north, font=\small] at ($(burgstc0.south) + (0, -1.05cm)$) {(a)};
\node[anchor=north, font=\small] at ($(burgstc1.south) + (0, -1.05cm)$) {(b)};
\node[anchor=north, font=\small] at ($(burgstc2.south) + (0, -1.05cm)$) {(c)};
\begin{axis}[
hide axis, scale only axis,
xmin=0, xmax=1, ymin=0, ymax=1,
height=0pt, width=0pt,
colormap/jet,
colorbar horizontal,
point meta min=-1, point meta max=1.2,
colorbar style={
  width=0.62\textwidth, height=0.25cm,
  xtick={-1.00, -0.45, 0.10, 0.65, 1.20},
  ticklabel style={font=\small},
  at={(burgstc1.south)}, anchor=north, yshift=-1.75cm
}]
\addplot [draw=none] coordinates {(0,0)};
\end{axis}

\end{tikzpicture}}
  \caption{Space-time viscous Burgers with merging shocks: contours of $q(x_1,t)$
  over $\Omega=[-1,1]\times[0,1]$ for \textbf{(a)} the vanilla PINN,
  \textbf{(b)} FT-PINN, and \textbf{(c)} the high-resolution finite volume
  reference.}
  \label{fig:burgst_contours}
\end{figure}

\begin{figure}[h]
  \centering
  \resizebox{0.98\textwidth}{!}{\input{_py/burgst_slices.tikz}}
  \caption{Space-time viscous Burgers equation with merging shocks: solution
  profiles $q(x_1)$ at \textbf{(a)} $t=0.1$, \textbf{(b)} $t=0.3$,
  \textbf{(c)} $t=0.6$, and \textbf{(d)} $t=0.9$, comparing the finite volume
  reference (\ref{line:burgst_ref}), the vanilla PINN (\ref{line:burgst_A}),
  and FT-PINN (\ref{line:burgst_B}).}
  \label{fig:burgst_slices}
\end{figure}

\begin{figure}[h]
  \centering
  \resizebox{0.9\textwidth}{!}{\begin{tikzpicture}
\begin{axis}[
xmin=-1,
xmax=1,
ymin=0,
ymax=1,
xlabel=$x$,
ylabel=$t$,
xtick={-1, -0.5, 0, 0.5, 1},
ytick={0, 0.25, 0.5, 0.75, 1},
width=0.40\textwidth,
height=4.4cm,
name=burgstp0]
\addplot []
graphics [xmin=-1,xmax=1,ymin=0,ymax=1] { 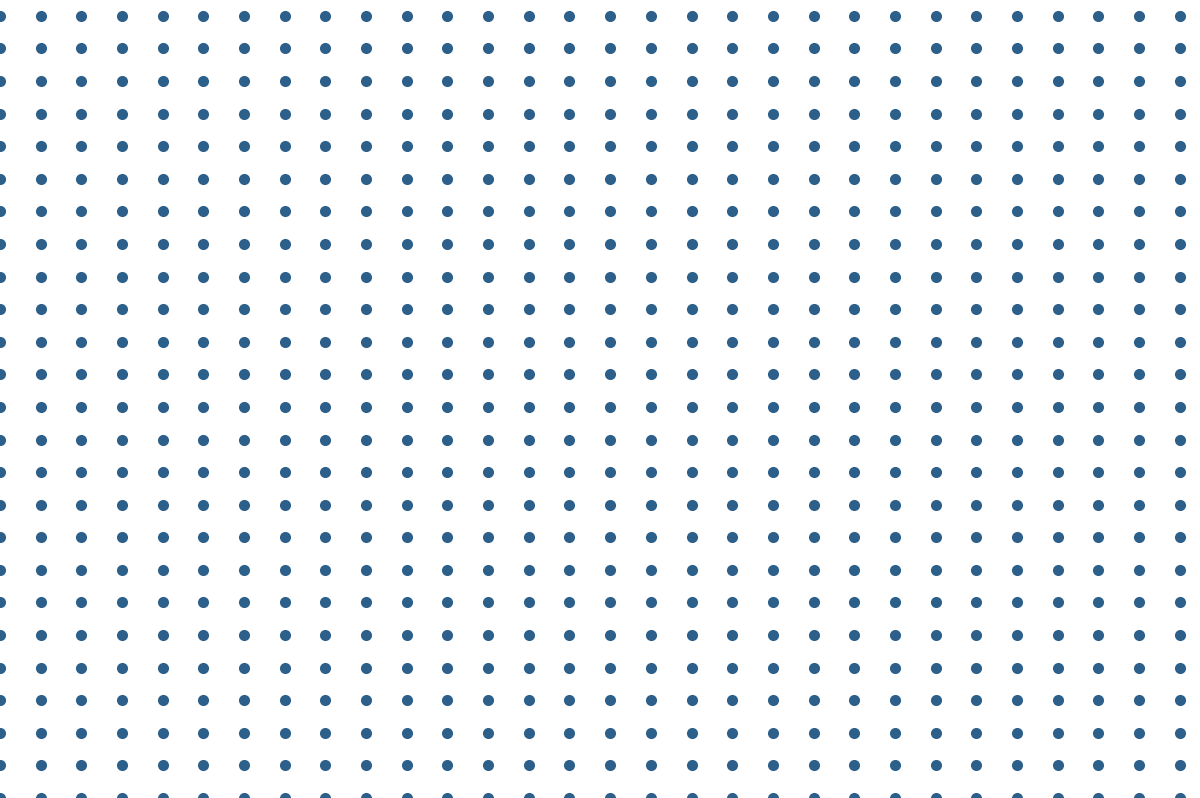};

\end{axis}
\begin{axis}[
xmin=-1,
xmax=1,
ymin=0,
ymax=1,
xlabel=$x$,
ylabel= ,
xtick={-1, -0.5, 0, 0.5, 1},
ytick={0, 0.25, 0.5, 0.75, 1},
width=0.40\textwidth,
height=4.4cm,
name=burgstp1,
anchor=west,
at={(burgstp0.east)},
xshift=1.15cm]
\addplot []
graphics [xmin=-1,xmax=1,ymin=0,ymax=1] { 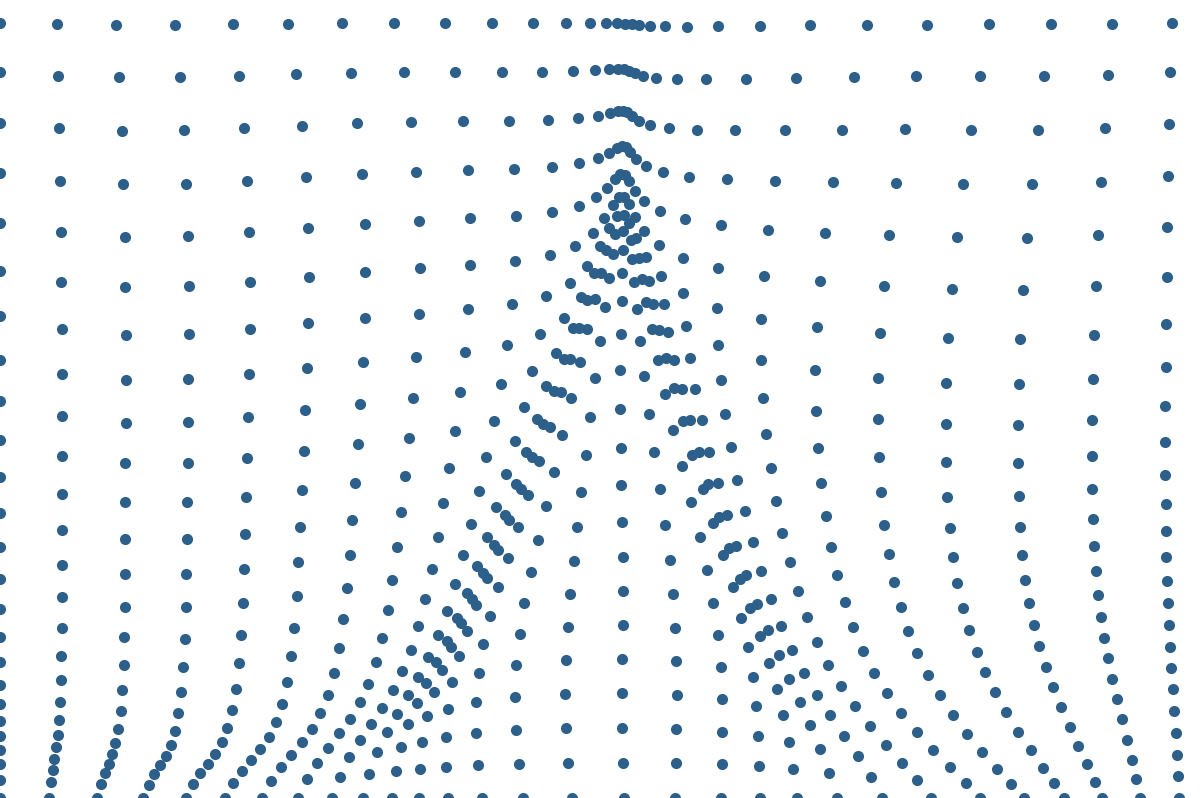};

\end{axis}
\node[anchor=north, font=\small] at ($(burgstp0.south) + (0, -1.05cm)$) {(a)};
\node[anchor=north, font=\small] at ($(burgstp1.south) + (0, -1.05cm)$) {(b)};
\end{tikzpicture}}
  \caption{Space-time viscous Burgers equation with merging shocks:
  \textbf{(a)} uniform reference collocation grid $\mathcal{X}_c^0$ and
  \textbf{(b)} its image $\mathcal{G}(\mathcal{X}_c^0;\boldsymbol{\tau}_*)$
  after joint training.}
  \label{fig:burgst_points}
\end{figure}

% =====================================================================
\subsection{Space-time Burgers with a decelerating shock}
\label{subsec:numexp:burgdec}

\paragraph{Setup}
Next, we consider the viscous Burgers equation ($\nu=10^{-3}$) with a decelerating shock~\cite{shi2022implicit} in the domain $\Omega=[-1,1]\times[0,1]$,
with $x_1$ the spatial and $t$ the temporal coordinate, and the initial condition
\begin{equation}
    q_0(x_1) = 2\,(x_1+1)^2\,\bigl(1 - H_\varepsilon(x_1)\bigr),
    \qquad
    H_\varepsilon(x_1) = \tfrac{1}{2}\bigl(1 + \tanh(x_1/\varepsilon)\bigr),
    \label{eqn:burgdec-ic}
\end{equation}
which is a quadratic ramp rising from $0$ at $x_1=-1$ to $2$ at $x_1=0^-$ followed by a jump
down to zero. The Heaviside jump is smoothed over $\varepsilon=7.5\times10^{-3}$. Dirichlet conditions $q=0$ are imposed at $x_1=\pm1$ (both are
consistent with the exact solution over the time window), the face $t=0$ carries
the initial condition, and a homogeneous
Neumann condition is applied at $t=1$ for stability.

\paragraph{Exact solution}
In the sharp-interface limit $\varepsilon\to0$, the corresponding inviscid
problem admits a closed-form entropy solution, derived in
\ref{app:burgdec}. In terms of $s = x_1+1$, it is given by
\begin{equation}
    q(x_1,t) =
    \begin{cases}
      \dfrac{4s^2}{4st + 1 + \sqrt{1+8st}}, & x_1 < x_s(t),\\[8pt]
      0, & x_1 \ge x_s(t),
    \end{cases}
    \label{eqn:burgdec-exact}
\end{equation}
while the shock traces the parametric curve
\begin{equation}
    t(a) = \frac{1-a^3}{3a^4},
    \qquad
    x_s(a) = (a-1) + \frac{2\,(1-a^3)}{3a^2},
    \qquad a \in (0,1],
    \label{eqn:burgdec-shock}
\end{equation}
which satisfies the Rankine--Hugoniot condition exactly.
\paragraph{Results}
Figure~\ref{fig:burgdec_slices} compares solution profiles at $t=0.05,0.35,0.65,0.95$
against~\eqref{eqn:burgdec-exact}. The FT-PINN profile follows the exact solution closely at every time, capturing both the quadratic
ramp and a sharp jump at the correct location. The vanilla PINN, by contrast,
fails outright, and never develops a
jump at all. Figure~\ref{fig:burgdec_errors} shows the corresponding pointwise
error $|q-q_{\mathrm{exact}}|$ on a logarithmic scale, with the horizontal
line marking $\nu$. Outside the shock layer the FT-PINN error is of the order
of $\nu$ at all four times, which is the level at which the viscous solution
itself departs from the inviscid reference, so the approximation is resolved
down to the floor set by the regularization. The narrow spike at the shock
position is the viscous layer of width $O(\nu)$ measured against a jump, and
is present for any solution of the viscous equation. The vanilla PINN error,
by contrast, is $O(1)$ over a wide band about the shock at every time.
Figure~\ref{fig:burgdec_contours} reinforces the same conclusion through the solution contours over
the whole space-time domain, and Figure~\ref{fig:burgdec_points} shows the collocation
distribution tracking the curved, decelerating shock path.

\begin{figure}[h]
  \centering
  \resizebox{0.98\textwidth}{!}{\input{_py/burgdec_slices.tikz}}
  \caption{Decelerating Burgers shock: solution profiles $q(x_1)$ at
  \textbf{(a)} $t=0.05$, \textbf{(b)} $t=0.35$, \textbf{(c)} $t=0.65$, and
  \textbf{(d)} $t=0.95$, comparing the exact inviscid
  solution (\ref{line:burgdec_ref}), the vanilla PINN
  (\ref{line:burgdec_A}), and FT-PINN (\ref{line:burgdec_B}). The dotted
  vertical line in each panel marks the exact shock position at that
  time.}
  \label{fig:burgdec_slices}
\end{figure}

\begin{figure}[h]
  \centering
  \resizebox{0.98\textwidth}{!}{\input{_py/burgdec_errors.tikz}}
  \caption{Decelerating Burgers shock: pointwise error $|q-q_{\mathrm{exact}}|$
  against the exact inviscid solution~\eqref{eqn:burgdec-exact} on a
  logarithmic scale at \textbf{(a)} $t=0.05$, \textbf{(b)} $t=0.35$,
  \textbf{(c)} $t=0.65$, and \textbf{(d)} $t=0.95$, for the vanilla PINN
  (\ref{line:burgdec_err_A}) and FT-PINN (\ref{line:burgdec_err_B}). The
  horizontal line (\ref{line:burgdec_err_nu}) marks the viscosity
  $\nu=10^{-3}$, the floor below which the error measures the regularization
  rather than the approximation, and the dotted vertical line marks the exact
  shock position. Errors below $10^{-5}$ are clipped to the lower axis limit.}
  \label{fig:burgdec_errors}
\end{figure}

\begin{figure}[h]
  \centering
  \resizebox{0.95\textwidth}{!}{\begin{tikzpicture}
\begin{axis}[
xmin=-1,
xmax=1,
ymin=0,
ymax=1,
xlabel=$x$,
ylabel=$t$,
xtick={-1, 0, 1},
ytick={0, 0.5, 1},
width=0.30\textwidth,
height=4.2cm,
name=burgdecc0]
\addplot []
graphics [xmin=-1,xmax=1,ymin=0,ymax=1] { 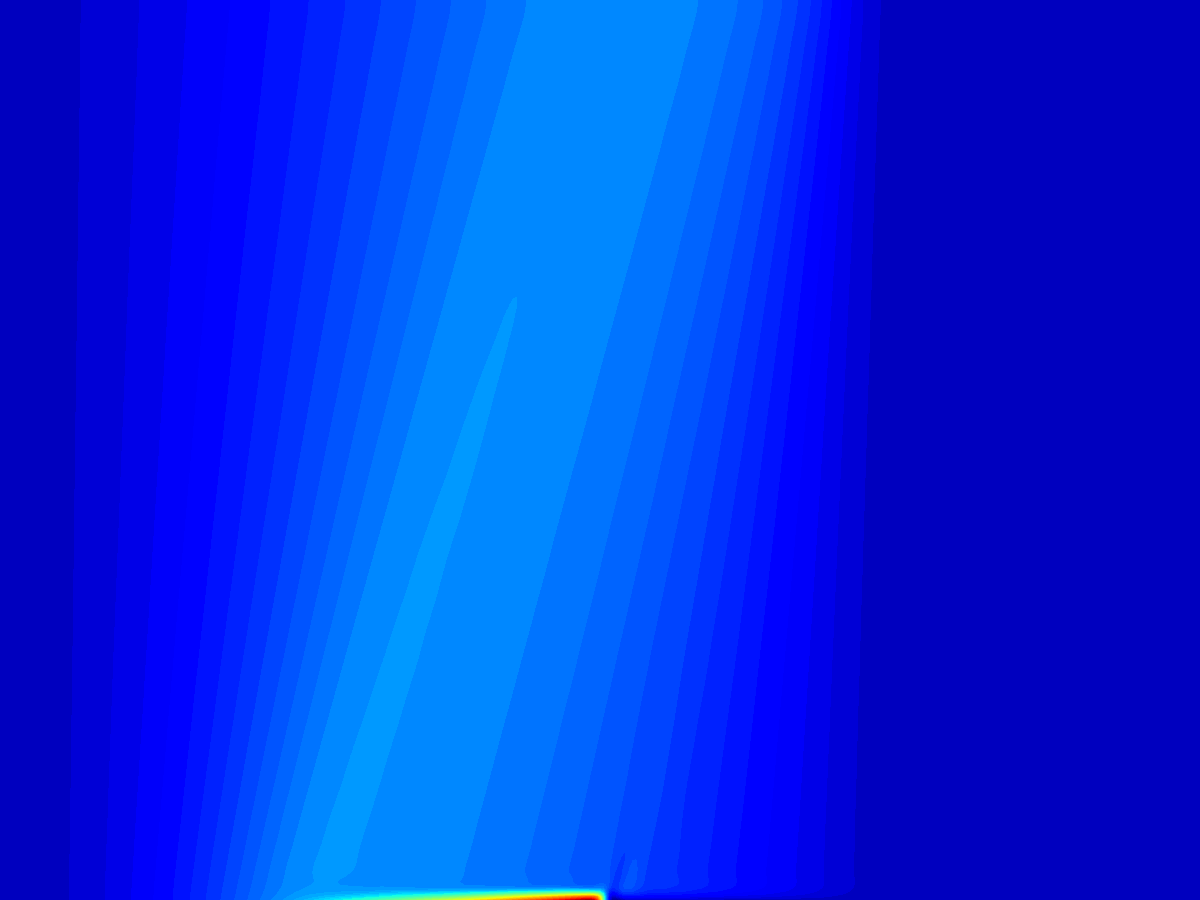};

\end{axis}
\begin{axis}[
xmin=-1,
xmax=1,
ymin=0,
ymax=1,
xlabel=$x$,
ylabel= ,
xtick={-1, 0, 1},
ytick={0, 0.5, 1},
width=0.30\textwidth,
height=4.2cm,
name=burgdecc1,
anchor=west,
at={(burgdecc0.east)},
xshift=0.85cm]
\addplot []
graphics [xmin=-1,xmax=1,ymin=0,ymax=1] { 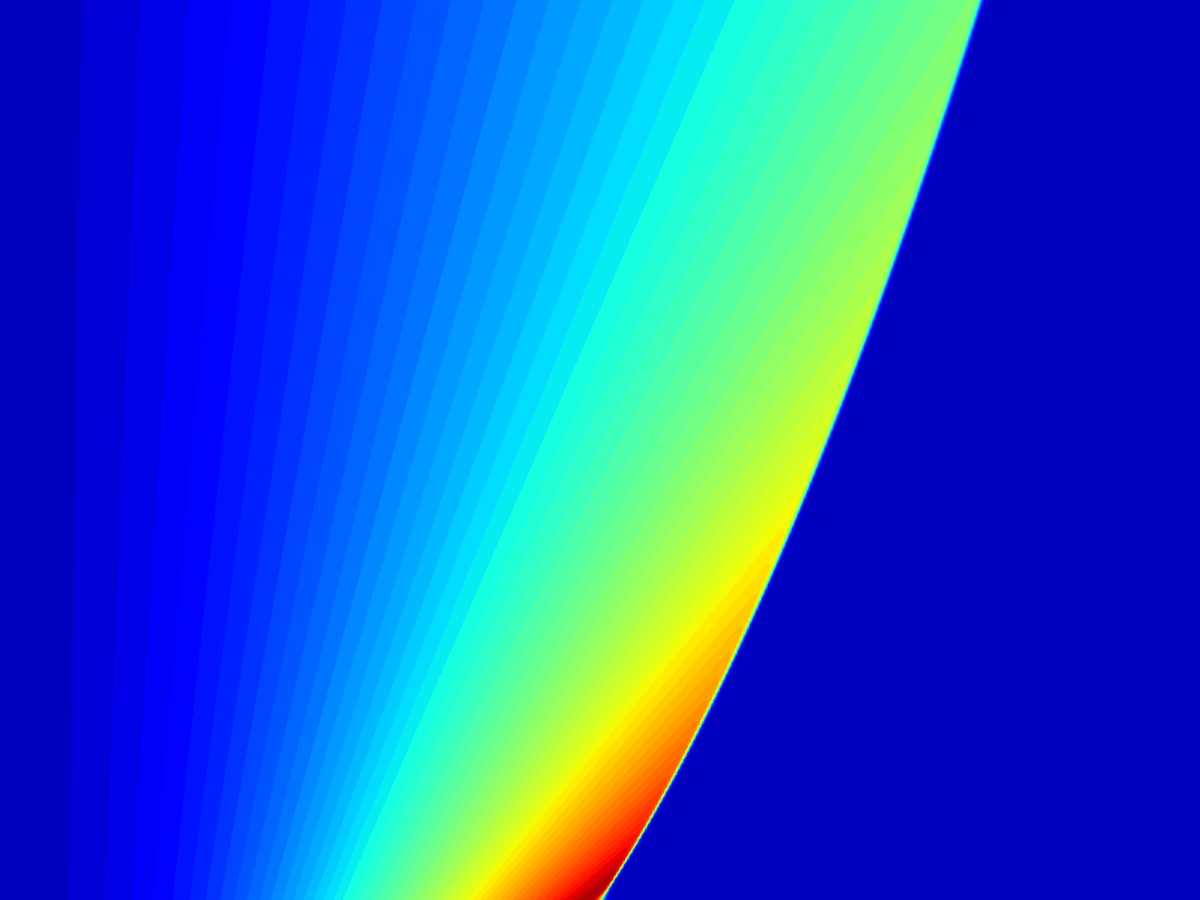};

\end{axis}
\begin{axis}[
xmin=-1,
xmax=1,
ymin=0,
ymax=1,
xlabel=$x$,
ylabel= ,
xtick={-1, 0, 1},
ytick={0, 0.5, 1},
width=0.30\textwidth,
height=4.2cm,
name=burgdecc2,
anchor=west,
at={(burgdecc1.east)},
xshift=0.85cm]
\addplot []
graphics [xmin=-1,xmax=1,ymin=0,ymax=1] { 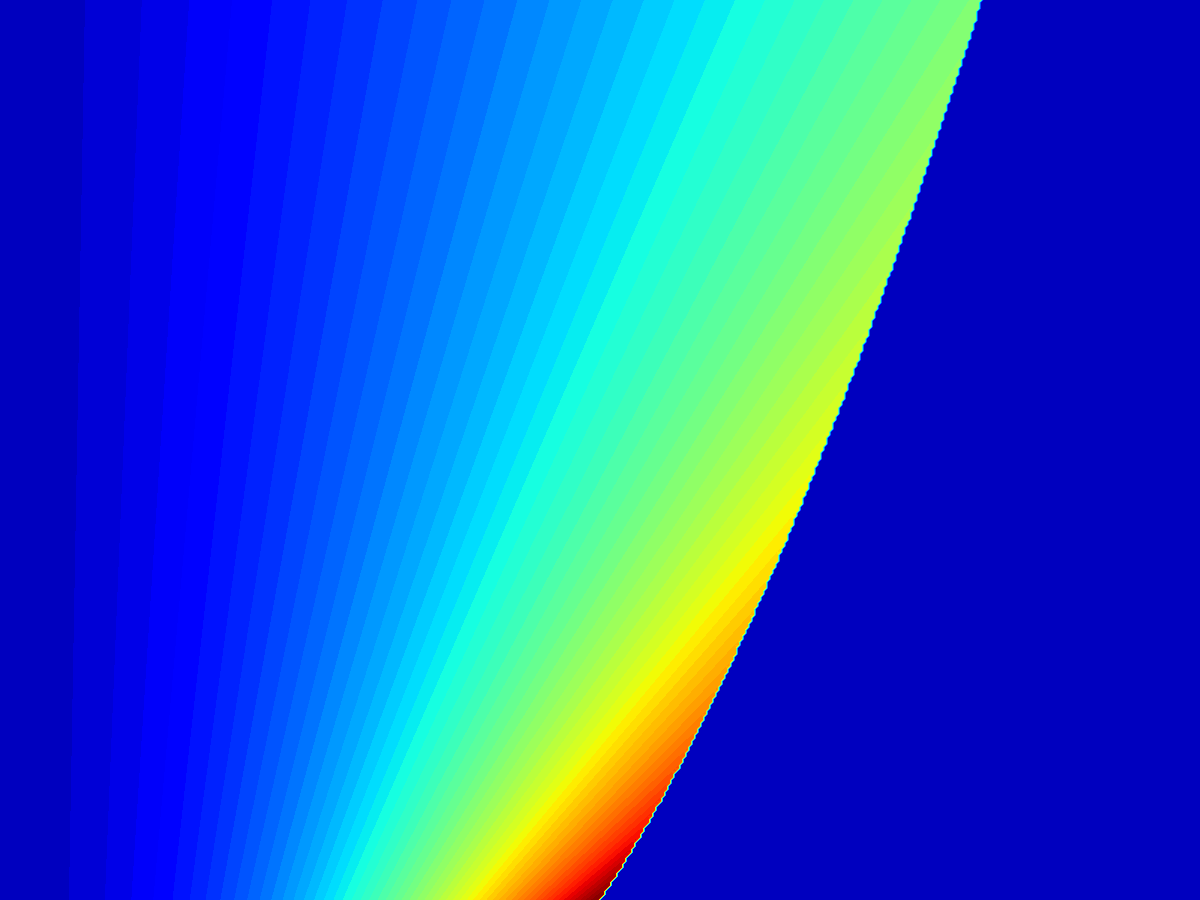};

\end{axis}
\node[anchor=north, font=\small] at ($(burgdecc0.south) + (0, -1.05cm)$) {(a)};
\node[anchor=north, font=\small] at ($(burgdecc1.south) + (0, -1.05cm)$) {(b)};
\node[anchor=north, font=\small] at ($(burgdecc2.south) + (0, -1.05cm)$) {(c)};
\begin{axis}[
hide axis, scale only axis,
xmin=0, xmax=1, ymin=0, ymax=1,
height=0pt, width=0pt,
colormap/jet,
colorbar horizontal,
point meta min=-0.1127, point meta max=1.98,
colorbar style={
  width=0.62\textwidth, height=0.25cm,
  xtick={-0.11, 0.41, 0.93, 1.46, 1.98},
  ticklabel style={font=\small},
  at={(burgdecc1.south)}, anchor=north, yshift=-1.75cm
}]
\addplot [draw=none] coordinates {(0,0)};
\end{axis}

\end{tikzpicture}}
  \caption{Decelerating Burgers shock: contours of $q(x_1,t)$ for \textbf{(a)}
  the vanilla PINN, \textbf{(b)} FT-PINN, and \textbf{(c)} the exact inviscid
  solution~\eqref{eqn:burgdec-exact}.}
  \label{fig:burgdec_contours}
\end{figure}

\begin{figure}[h]
  \centering
  \resizebox{0.9\textwidth}{!}{\begin{tikzpicture}
\begin{axis}[
xmin=-1,
xmax=1,
ymin=0,
ymax=1,
xlabel=$x$,
ylabel=$t$,
xtick={-1, -0.5, 0, 0.5, 1},
ytick={0, 0.25, 0.5, 0.75, 1},
width=0.40\textwidth,
height=4.4cm,
name=burgdecp0]
\addplot []
graphics [xmin=-1,xmax=1,ymin=0,ymax=1] { 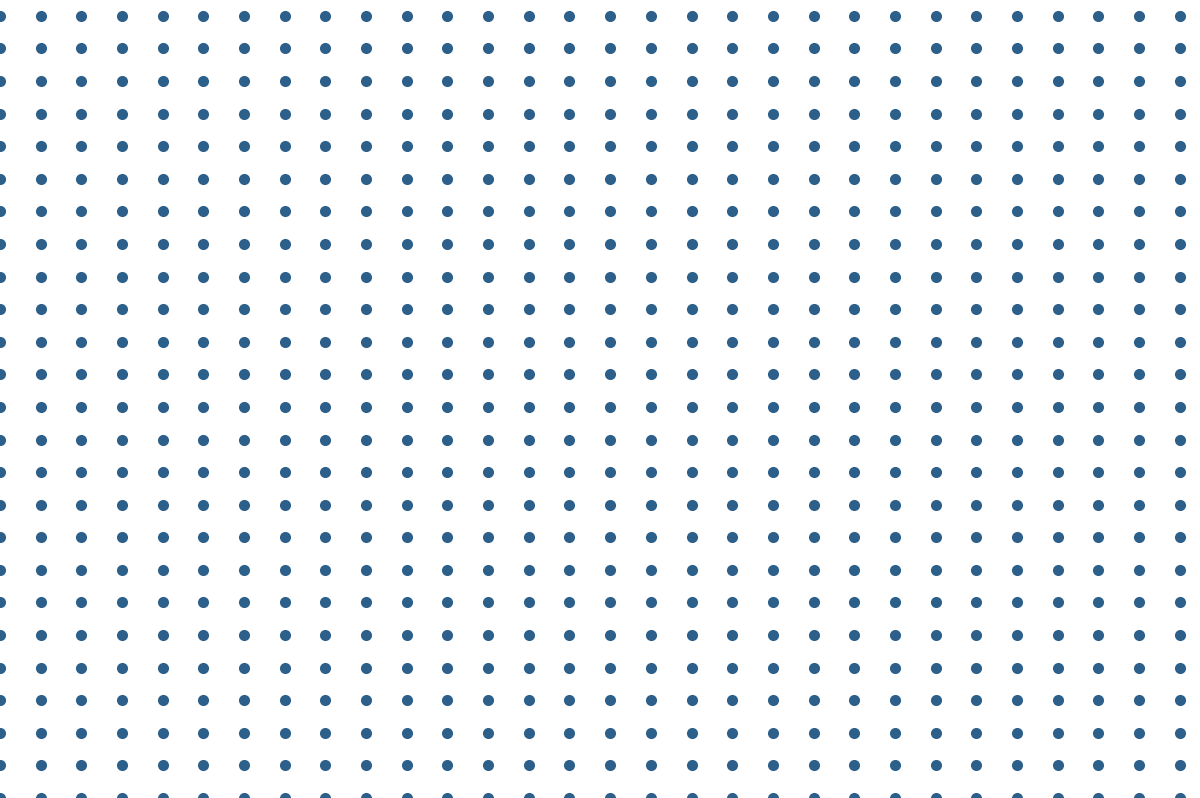};

\end{axis}
\begin{axis}[
xmin=-1,
xmax=1,
ymin=0,
ymax=1,
xlabel=$x$,
ylabel= ,
xtick={-1, -0.5, 0, 0.5, 1},
ytick={0, 0.25, 0.5, 0.75, 1},
width=0.40\textwidth,
height=4.4cm,
name=burgdecp1,
anchor=west,
at={(burgdecp0.east)},
xshift=1.15cm]
\addplot []
graphics [xmin=-1,xmax=1,ymin=0,ymax=1] { 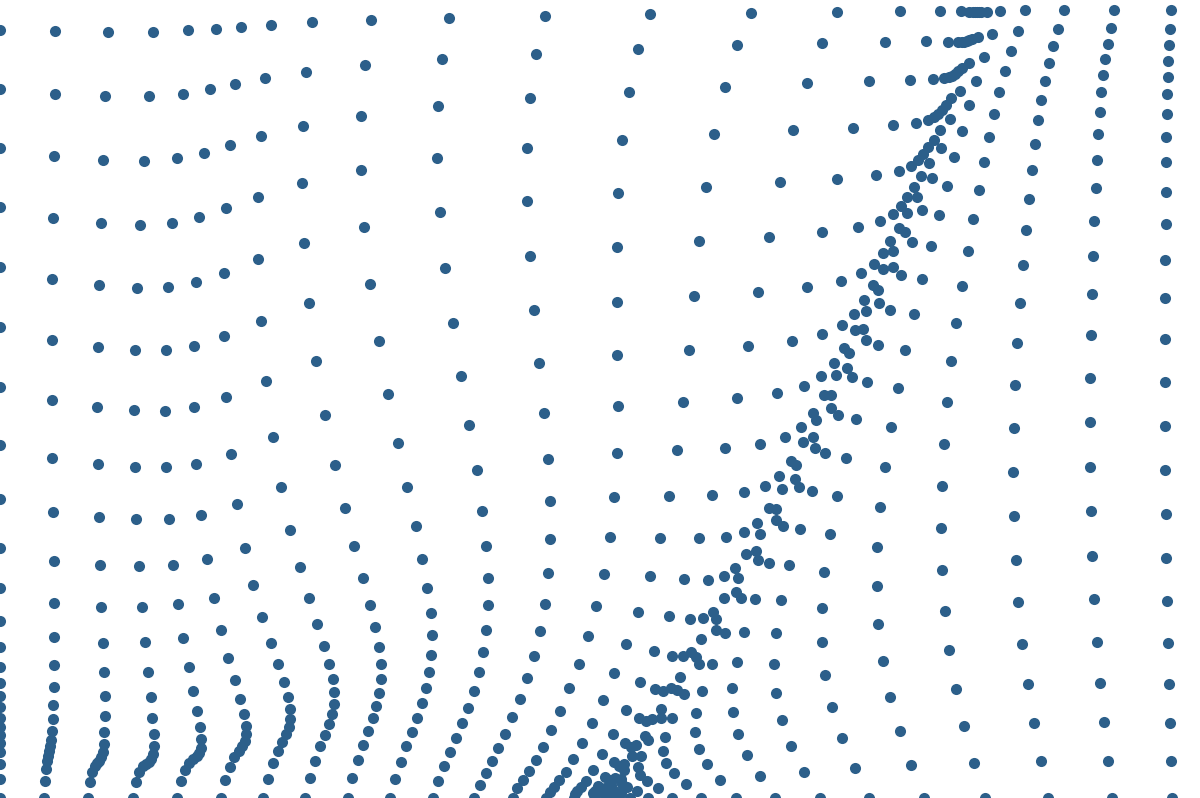};

\end{axis}
\node[anchor=north, font=\small] at ($(burgdecp0.south) + (0, -1.05cm)$) {(a)};
\node[anchor=north, font=\small] at ($(burgdecp1.south) + (0, -1.05cm)$) {(b)};
\end{tikzpicture}}
  \caption{Decelerating Burgers shock: \textbf{(a)} uniform reference
  collocation grid $\mathcal{X}_c^0$ and \textbf{(b)} its image
  $\mathcal{G}(\mathcal{X}_c^0;\boldsymbol{\tau}_*)$ after joint training.
  Points concentrate along the curved space--time trajectory of the
  decelerating shock, a feature a separable tensor-product map cannot
  represent.}
  \label{fig:burgdec_points}
\end{figure}

% =====================================================================
\subsection{One-dimensional Euler shock tube in space-time}
\label{subsec:numexp:sst}

\paragraph{Setup}
We now move to a system of conservation laws, solving the one-dimensional
compressible Euler equations for the Sod shock tube in a space-time setting on
$\Omega=[0,1]\times[0,0.2]$. The conserved state is
$U = [\rho,\ \rho v,\ \rho E]$ with $\rho E = P/(\gamma-1) + \tfrac12\rho v^2$
and $\gamma=1.4$, and the space-time fluxes are
\begin{equation}
    F_I(U) = \begin{bmatrix}
      \rho v & \rho \\
      \rho v^2 + P & \rho v \\
      (\rho E + P)v & \rho E
    \end{bmatrix},
    \qquad
    F_V(U,\nabla U) = \begin{bmatrix}
      \nu\,\partial_{x_1} \rho & 0\\
      \nu\,\partial_{x_1}(\rho v) & 0\\
      \nu\,\partial_{x_1}(\rho E) & 0
    \end{bmatrix},
    \label{eqn:sst-flux}
\end{equation}
where the artificial viscosity $\nu = 8\times10^{-4}$. The outputs of the solution network are the
primitive variables $W = [\rho,\ P,\ v]$ and the conserved state and its
gradients are reconstructed by the chain rule. The initial condition is the
following Riemann problem,
\begin{equation}
    [\rho,P,v](x_1,0) =
    \begin{cases}
      [1.0,\ 1.0,\ 0.0], & x_1<0.5,\\
      [0.125,\ 0.1,\ 0.0], & x_1\ge0.5,
    \end{cases}
    \label{eqn:sod-ic}
\end{equation}
imposed on the face $t=0$. No spatial boundary conditions are required because
none of the waves reaches $x_1=0$ or $x_1=1$ within the time window. The solution
consists of three waves emanating from $(x_1,t)=(0.5,0)$---a left-moving
rarefaction fan, a contact discontinuity, and a right-moving shock.

\paragraph{Exact solution}
The reference is the exact solution of the Riemann problem, obtained by solving
the star-region pressure equation to machine precision by Newton iteration and
then sampling the resulting wave structure along $\xi=(x_1-0.5)/t$.

\paragraph{Results}
Figure~\ref{fig:sst_slices} compares $\rho$, $P$, and $v$ against the exact
solution at $t=0.05,0.10,0.15,0.20$. FT-PINN is able to approximate the rarefaction fan, 
contact discontinuity, and the shock well at all four times. The vanilla PINN, on the other hand, 
is unable to resolve any of the three waves. Furthermore, Figures~\ref{fig:sst_rho_contours} and~\ref{fig:sst_points} 
show the contours of the density field over the entire domain and the collocation distribution, respectively. The 
mapped points fan out from the membrane separating the high and low pressure regions, but they do not distribute themselves
evenly among the three waves. The points concentrate strongly along the shock and, to a lesser extent, along
the rarefaction, while the contact receives negligible dedicated clustering. Since the deformation
parameters are driven only by the pulled-back residual, the map clusters points only where the residual
is large, and the residual of the trained FT-PINN shows that this is not the case at the contact. Using the
exact solution to locate the waves, we compute the root-mean-square residual norm in a band of half-width
$0.02$ in $x_1$ about the contact and about the shock for $t\ge0.05$, where the bands are disjoint, as well as
over the smooth regions away from all three waves. The residual is $1.8\times10^{-3}$ in the contact band,
$1.2\times10^{-2}$ in the shock band, and $7.1\times10^{-3}$ in the smooth regions. The contact is therefore
indistinguishable from smooth flow as far as the loss is concerned, so there is nothing for the map to gain by moving points toward it. Despite receiving no
dedicated points, the contact is nevertheless captured well in
Figure~\ref{fig:sst_slices}, consistent with its low residual.
\begin{figure}[h]
  \centering
  \resizebox{0.98\textwidth}{!}{\input{_py/sst_slices.tikz}}
  \caption{Euler shock tube: primitive profiles against $x_1$, with $\rho$ in
  the top row, $P$ in the middle row, and $v$ in the bottom row. Columns are
  \textbf{(a)} $t=0.05$, \textbf{(b)} $t=0.10$, \textbf{(c)} $t=0.15$, and
  \textbf{(d)} $t=0.20$, comparing the exact Riemann solution
  (\ref{line:sst_exact}), the vanilla PINN (\ref{line:sst_A}), and FT-PINN
  (\ref{line:sst_B}).}
  \label{fig:sst_slices}
\end{figure}

\begin{figure}[h]
  \centering
  \resizebox{0.95\textwidth}{!}{\begin{tikzpicture}
\begin{axis}[
xmin=0,
xmax=1,
ymin=0,
ymax=0.2,
xlabel=$x$,
ylabel=$t$,
xtick={0, 0.5, 1},
ytick={0, 0.1, 0.2},
width=0.30\textwidth,
height=4.2cm,
name=sst_rhoc0]
\addplot []
graphics [xmin=0,xmax=1,ymin=0,ymax=0.2] { 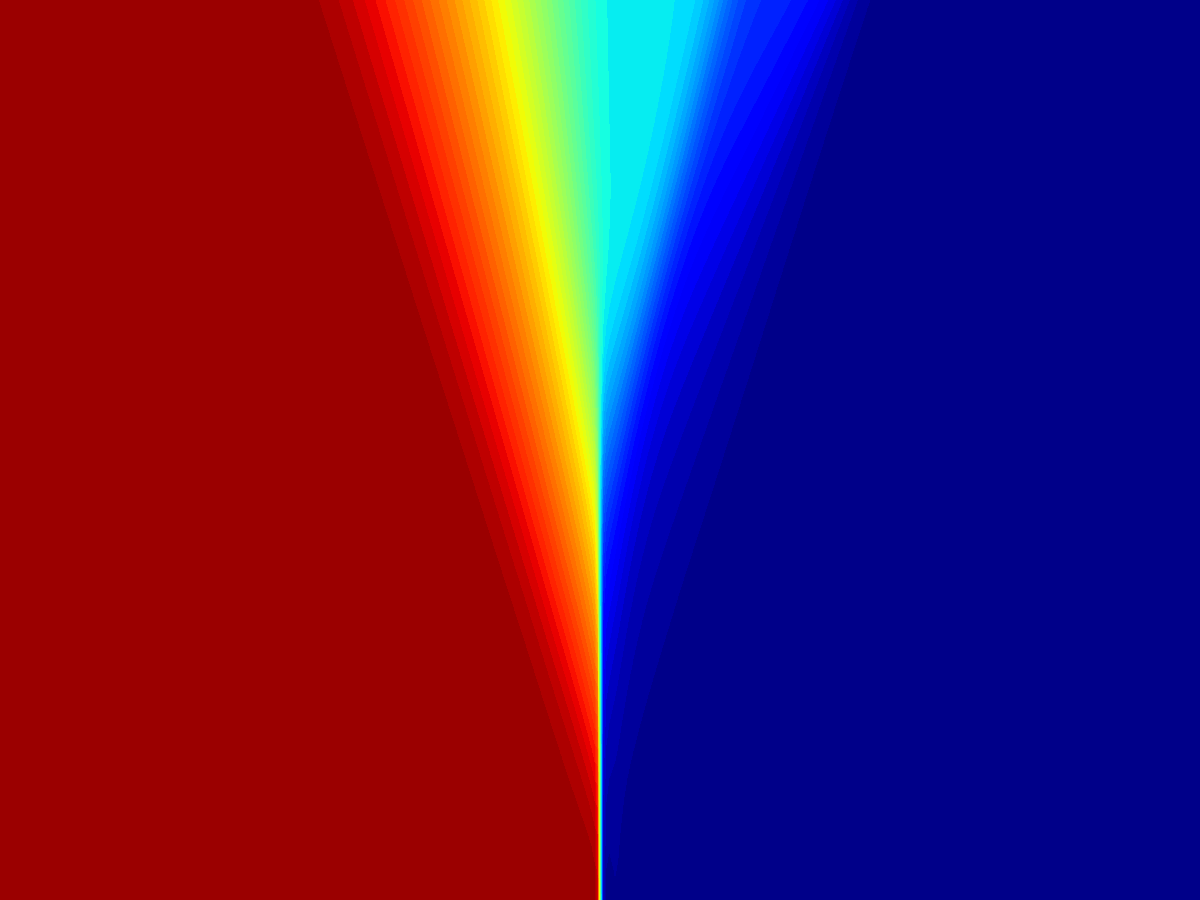};

\end{axis}
\begin{axis}[
xmin=0,
xmax=1,
ymin=0,
ymax=0.2,
xlabel=$x$,
ylabel= ,
xtick={0, 0.5, 1},
ytick={0, 0.1, 0.2},
width=0.30\textwidth,
height=4.2cm,
name=sst_rhoc1,
anchor=west,
at={(sst_rhoc0.east)},
xshift=0.85cm]
\addplot []
graphics [xmin=0,xmax=1,ymin=0,ymax=0.2] { 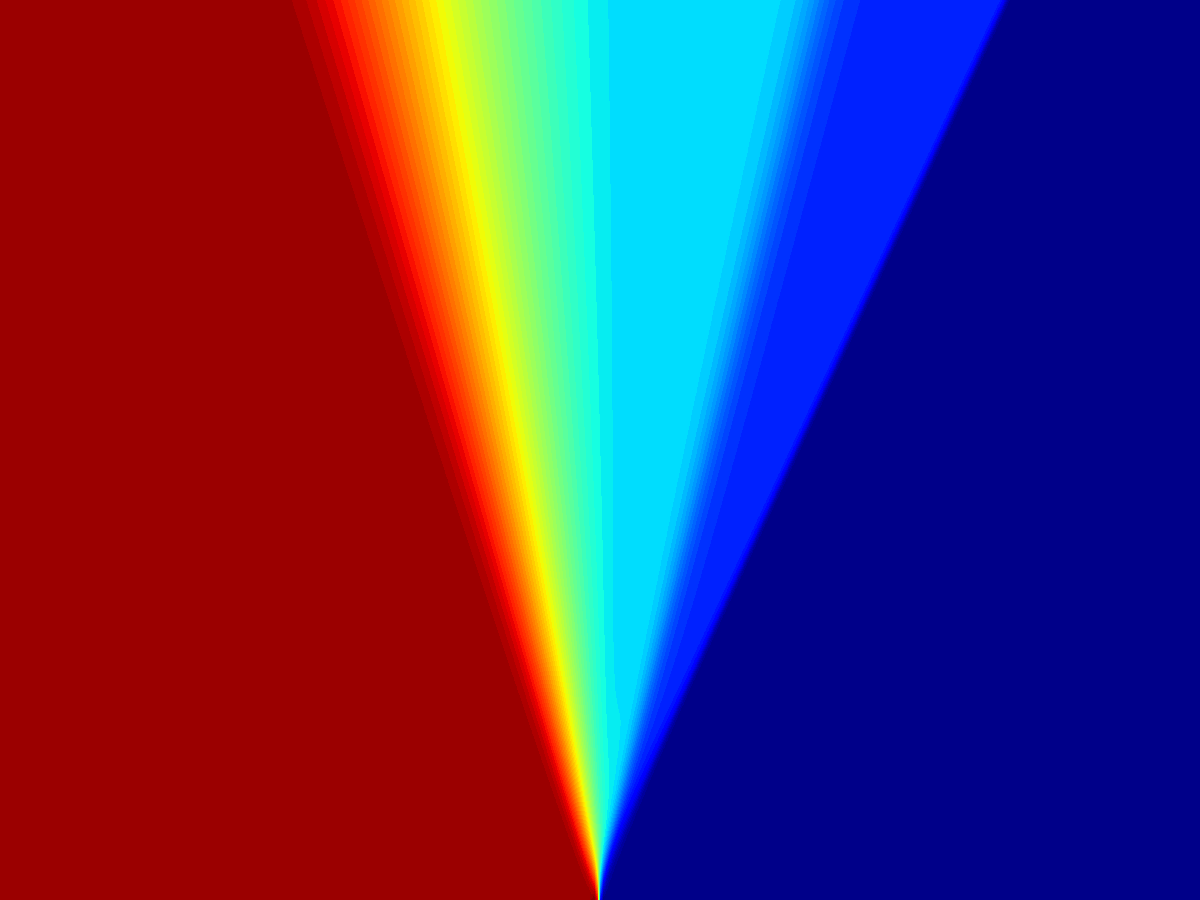};

\end{axis}
\begin{axis}[
xmin=0,
xmax=1,
ymin=0,
ymax=0.2,
xlabel=$x$,
ylabel= ,
xtick={0, 0.5, 1},
ytick={0, 0.1, 0.2},
width=0.30\textwidth,
height=4.2cm,
name=sst_rhoc2,
anchor=west,
at={(sst_rhoc1.east)},
xshift=0.85cm]
\addplot []
graphics [xmin=0,xmax=1,ymin=0,ymax=0.2] { 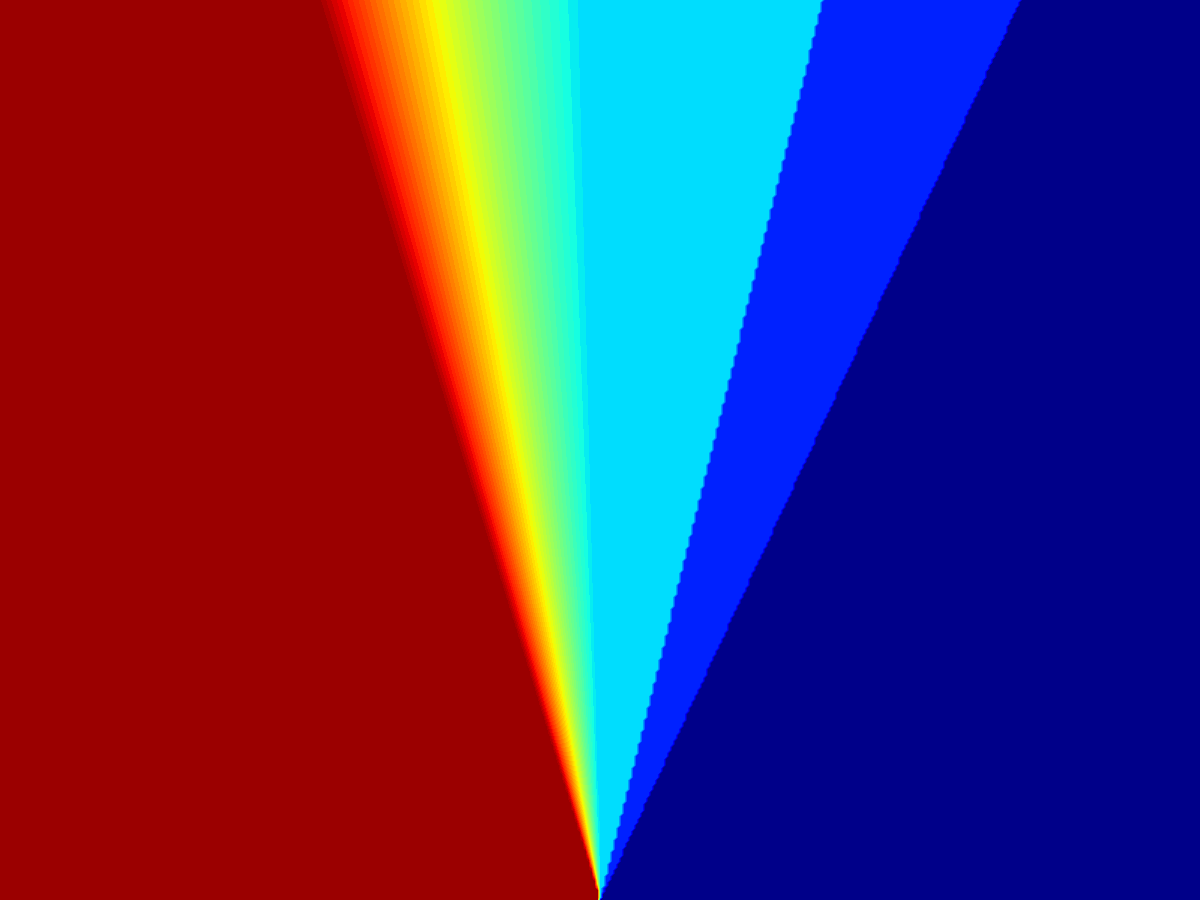};

\end{axis}
\node[anchor=north, font=\small] at ($(sst_rhoc0.south) + (0, -1.05cm)$) {(a)};
\node[anchor=north, font=\small] at ($(sst_rhoc1.south) + (0, -1.05cm)$) {(b)};
\node[anchor=north, font=\small] at ($(sst_rhoc2.south) + (0, -1.05cm)$) {(c)};
\begin{axis}[
hide axis, scale only axis,
xmin=0, xmax=1, ymin=0, ymax=1,
height=0pt, width=0pt,
colormap/jet,
colorbar horizontal,
point meta min=0.1212, point meta max=1.018,
colorbar style={
  width=0.62\textwidth, height=0.25cm,
  xtick={0.12, 0.35, 0.57, 0.79, 1.02},
  ticklabel style={font=\small},
  at={(sst_rhoc1.south)}, anchor=north, yshift=-1.75cm
}]
\addplot [draw=none] coordinates {(0,0)};
\end{axis}

\end{tikzpicture}}
  \caption{Euler shock tube: contours of density $\rho$ over the space--time
  slab for \textbf{(a)} the vanilla PINN, \textbf{(b)} FT-PINN, and
  \textbf{(c)} the exact Riemann solution.}
  \label{fig:sst_rho_contours}
\end{figure}

\begin{figure}[h]
  \centering
  \resizebox{0.9\textwidth}{!}{\begin{tikzpicture}
\begin{axis}[
xmin=0,
xmax=1,
ymin=0,
ymax=0.2,
xlabel=$x$,
ylabel=$t$,
xtick={0, 0.25, 0.5, 0.75, 1},
ytick={0, 0.05, 0.1, 0.15, 0.2},
width=0.40\textwidth,
height=4.4cm,
name=sstp0,
yticklabels={0, 0.05, 0.1, 0.15, 0.2}]
\addplot []
graphics [xmin=0,xmax=1,ymin=0,ymax=0.2] { 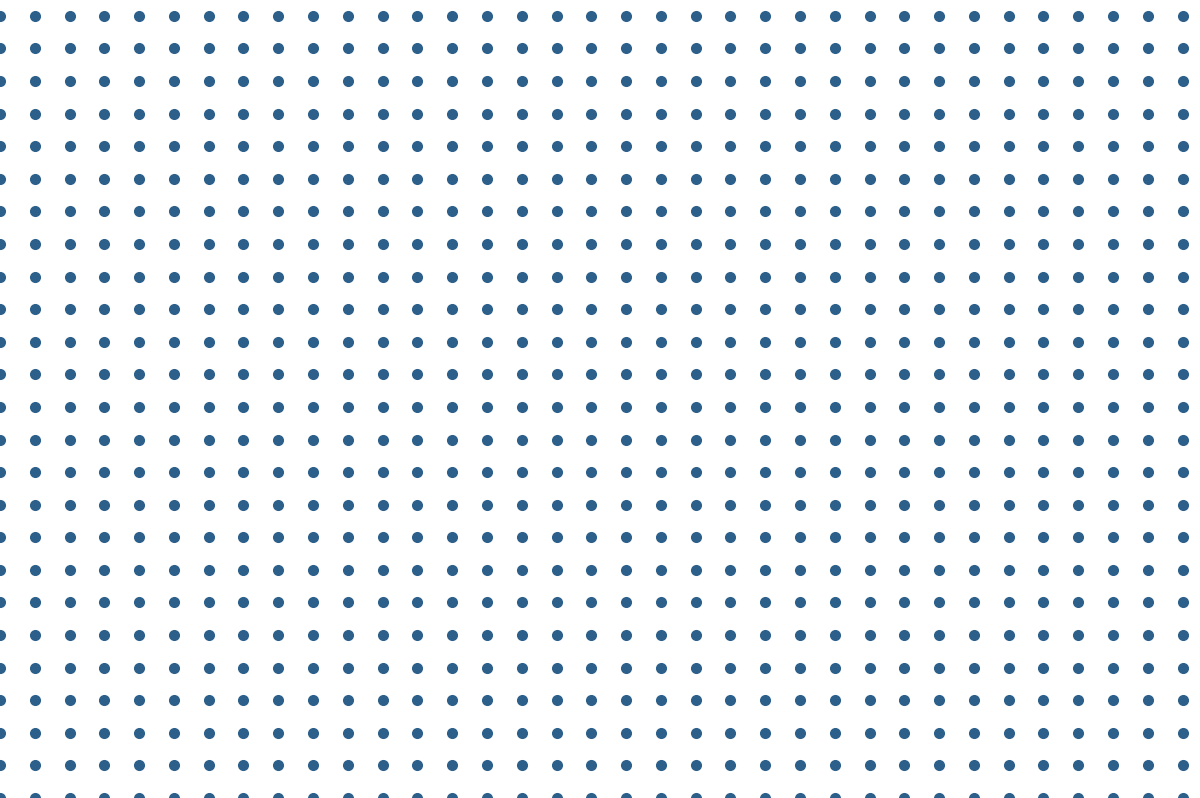};

\end{axis}
\begin{axis}[
xmin=0,
xmax=1,
ymin=0,
ymax=0.2,
xlabel=$x$,
ylabel= ,
xtick={0, 0.25, 0.5, 0.75, 1},
ytick={0, 0.05, 0.1, 0.15, 0.2},
width=0.40\textwidth,
height=4.4cm,
name=sstp1,
yticklabels={0, 0.05, 0.1, 0.15, 0.2},
anchor=west,
at={(sstp0.east)},
xshift=1.15cm]
\addplot []
graphics [xmin=0,xmax=1,ymin=0,ymax=0.2] { 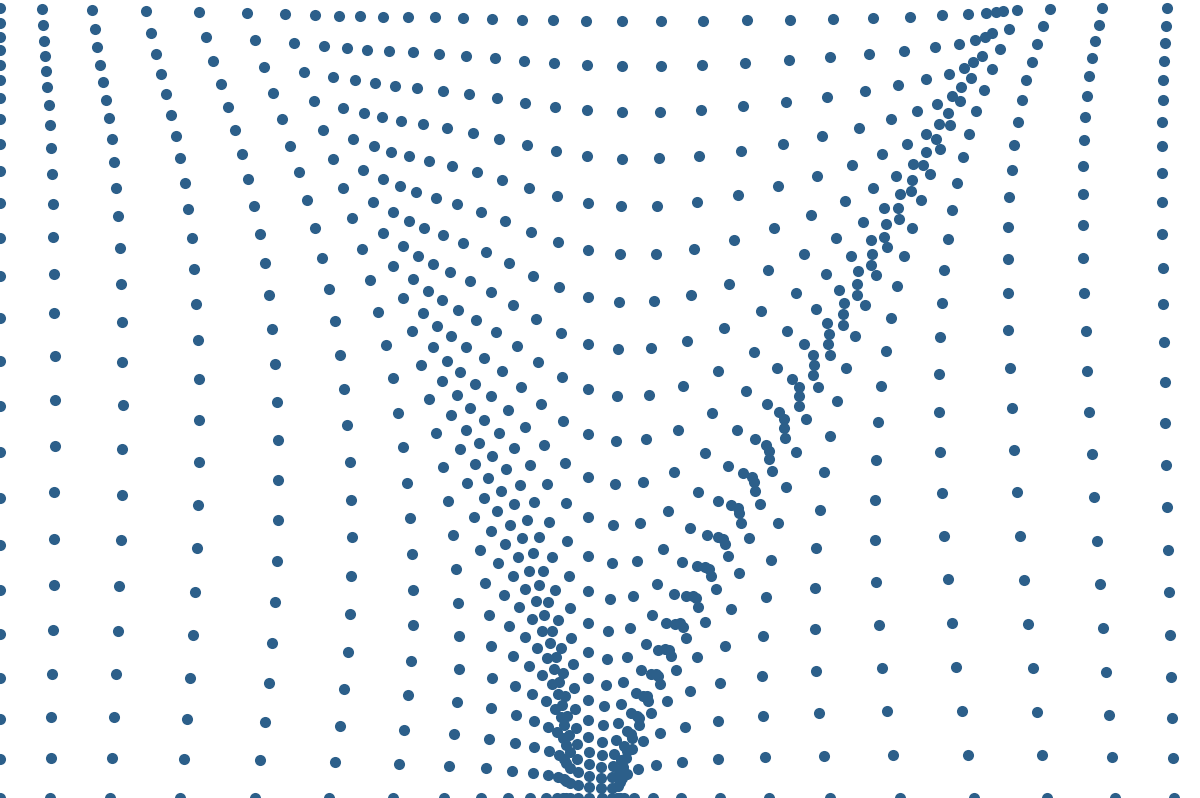};

\end{axis}
\node[anchor=north, font=\small] at ($(sstp0.south) + (0, -1.05cm)$) {(a)};
\node[anchor=north, font=\small] at ($(sstp1.south) + (0, -1.05cm)$) {(b)};
\end{tikzpicture}}
  \caption{Euler shock tube: \textbf{(a)} uniform reference collocation grid
  $\mathcal{X}_c^0$ and \textbf{(b)} its image
  $\mathcal{G}(\mathcal{X}_c^0;\boldsymbol{\tau}_*)$ after joint training.}
  \label{fig:sst_points}
\end{figure}

% =====================================================================
\subsection{Steady two-dimensional Euler regular shock reflection}
\label{subsec:numexp:rr}

\paragraph{Setup}
Finally we consider a spatially two-dimensional steady problem with two
oblique shocks. We solve the steady compressible Euler equations on the
rectangle $\Omega=[0,2]\times[0,1]$ with the conserved state
$U = [\rho,\ \rho u,\ \rho v,\ \rho E]$ and the inviscid flux
\begin{equation}
    F_I(U) = \begin{bmatrix}
      \rho u & \rho v\\
      \rho u^2 + P & \rho u v\\
      \rho u v & \rho v^2 + P\\
      (\rho E+P)u & (\rho E+P)v
    \end{bmatrix},
    \label{eqn:rr-flux}
\end{equation}
regularized by artificial viscous flux $F_V = \nu\nabla U$ with
$\nu = 7.5\times10^{-4}$. The network outputs the primitives
$W=[\rho,u,v,P]$. The inflow is supersonic, and an incident oblique shock is
supplied ready-made through the inlet by a two-state Dirichlet profile on
$x_1=0$. The uniform pre-shock freestream state $W_1$, at $M_1 = 2.9$ in the
$+x_1$ direction, is imposed for $x_2 \ge x_{2,\mathrm{split}}$ and the
post-shock state $W_2$ for $x_2 < x_{2,\mathrm{split}}$, with
$x_{2,\mathrm{split}}=0.5$. Across the incident shock the flow turns upward by
the deflection angle $\theta_w = 10^\circ$, so the shock runs obliquely toward
the top wall and impinges on it at $x_{1,\mathrm{imp}} = 0.935$
(\ref{app:rr}; the configuration and all angles are sketched to scale in
Figure~\ref{fig:rr_schematic}). A slip wall boundary condition
is applied at the face $x_2=1$. No boundary conditions are imposed on the faces $x_2=0$ and $x_1=2$.
The deflection angle $\theta_w$ is small enough that a single reflected shock,
oblique and inclined at a non-specular angle to the wall, turns the stream back parallel to the wall.
The resulting two-shock configuration is termed a regular reflection \cite{bendor2007shock}.

\paragraph{Exact solution}
The complete analytical solution consists of three uniform states separated by
two straight oblique shocks, and is obtained by applying the oblique-shock
relations across each leg in sequence. The full derivation,
from the Rankine--Hugoniot conditions through the $\theta$--$\beta$--$M$
relation~\eqref{eqn:app-rr-tbm}, is given in \ref{app:rr}. Applying the
relation once at $(M_1,\theta_w)$ gives the incident shock angle
$\beta = 28.13^\circ$ and the post-incident state, and applying it a second
time at
$(M_2,\theta_w)$---the reflected shock must turn the stream, inclined at
$+\theta_w$, back to wall-parallel---gives
$\beta_{\mathrm{refl}} = 32.74^\circ$ and hence the reflected leg inclination
$\phi_{\mathrm{refl}} = \beta_{\mathrm{refl}} - \theta_w
= 22.74^\circ$ measured from the wall. This quantifies the non-specularity
noted above, i.e., $\phi_{\mathrm{refl}} \ne \beta$.
The three states are
\begin{equation}
\begin{aligned}
  W_1 &= (1.000,\ 3.431,\ 0.000,\ 1.000), \quad M_1 = 2.90,\\
  W_2 &= (1.633,\ 3.136,\ 0.553,\ 2.014), \quad M_2 = 2.42,\\
  W_3 &= (2.505,\ 2.904,\ 0.000,\ 3.700), \quad M_3 = 2.02,
\end{aligned}
    \label{eqn:rr-states}
\end{equation}
in the ordering $(\rho,u,v,P)$, with $v_3=0$ confirming that the reflected shock
restores wall-parallel flow.

\paragraph{A second vanilla PINN with random collocation points}
On this problem the vanilla PINN trained on the uniform $98\times58$ grid of
Table~\ref{tab:hyper} collapses to a nearly uniform field. The interior grid
excludes the faces of $\Omega$, so its first column lies at $x_1\approx0.02$,
and the soft-constrained network places the entire inlet jump in the band
between the inlet face and this column, where no residual is evaluated, after
which a nearly constant field drives the residual to a small value everywhere
else. To show what is needed for the vanilla PINN to produce a shock
structure at all, we also report a vanilla PINN trained on $N_c=12{,}000$
interior collocation points drawn once, uniformly at random over $\Omega$
with a fixed seed, and held fixed for the whole run (Figure~\ref{fig:rr_points}(c)).
A random set has no empty band along the faces, and the budget is roughly twice
the $N_c=5684$ interior points of the grid. The batch size is doubled to $N_B=6000$ so that the number of Adam
steps is unchanged, and all other settings, including the boundary point sets,
the network, and the training horizon, are those of Table~\ref{tab:hyper}. FT-PINN
is trained on the $98\times58$ grid as before.

\paragraph{Results}
Figure~\ref{fig:rr_p_contours} compares pressure contours against the
analytical three-state solution. The vanilla PINN on the uniform grid produces an essentially uniform field,
having failed to develop either shock, so the incident shock injected at the
inlet is immediately diffused away and no reflection forms at all. With
$12{,}000$ random collocation points the vanilla PINN does develop the two-shock
pattern, but both legs are broadened to many times the viscous width, the
incident leg in particular, the reflection point on the wall is poorly
defined, and the post-reflection state behind the reflected leg is not
uniform. FT-PINN, on the same $98\times58$ grid on which the first vanilla PINN
collapsed, recovers the full two-shock pattern---an incident leg at the correct angle,
a well-defined reflection point on the wall near
$x_{1,\mathrm{imp}}=0.935$, and a reflected leg at the correct non-specular
inclination---in close agreement with the analytical field.
Figure~\ref{fig:rr_slices} shows vertical slices for $P$ and $\rho$ at
$x_1 = 0.4, 0.7$ (crossing the incident leg) and $x_1 = 1.2, 1.6$ (crossing the
reflected leg). FT-PINN closely approximates the near-vertical jumps. The vanilla PINN
on the grid shows no jump structure, while the random-point vanilla PINN
reaches roughly the correct post-shock levels but spreads each jump over a
large fraction of the channel height.
Figure~\ref{fig:rr_points} shows the collocation distribution of FT-PINN. The points get packed along
both shock legs, meeting at the wall impingement point, and slide along the
faces where the shocks meet the boundary---the inlet, the wall, and the
outflow---without leaving the domain.

\begin{figure}[h]
  \centering
  \resizebox{0.9\textwidth}{!}{\begin{tikzpicture}
\begin{axis}[
axis equal image,
xmin=0,
xmax=2,
ymin=0,
ymax=1,
xlabel=$x$,
ylabel=$y$,
xtick={0, 0.5, 1, 1.5, 2},
ytick={0, 0.25, 0.5, 0.75, 1},
width=0.30\textwidth,
height=3.0cm,
name=rrp0]
\addplot []
graphics [xmin=0,xmax=2,ymin=0,ymax=1] { 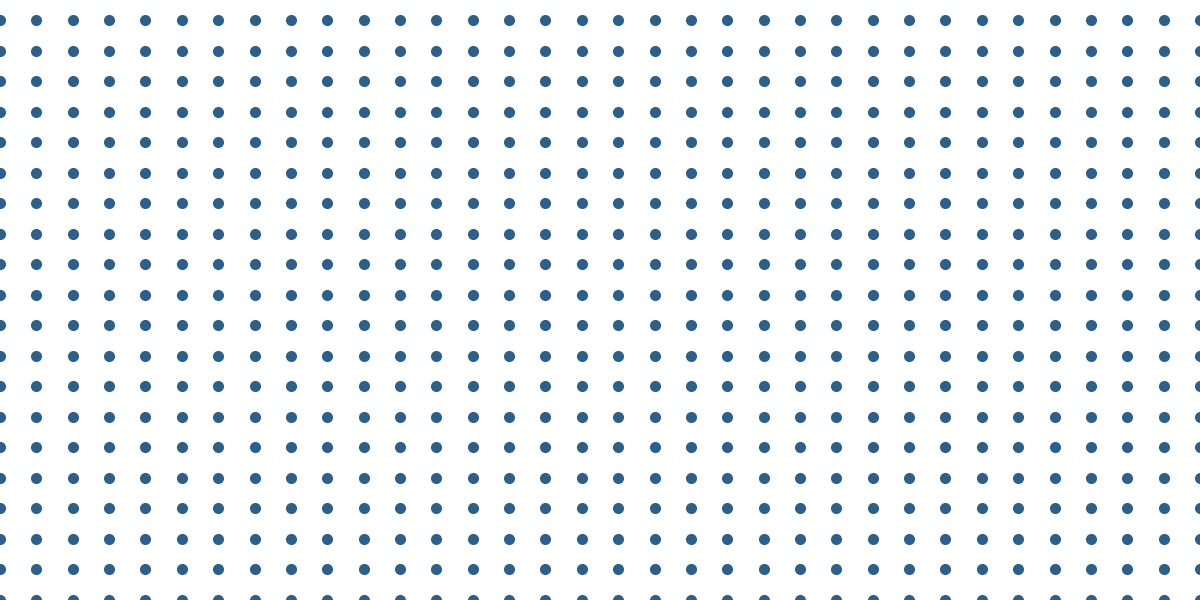};

\end{axis}
\begin{axis}[
axis equal image,
xmin=0,
xmax=2,
ymin=0,
ymax=1,
xlabel=$x$,
ylabel= ,
xtick={0, 0.5, 1, 1.5, 2},
ytick={0, 0.25, 0.5, 0.75, 1},
width=0.30\textwidth,
height=3.0cm,
name=rrp1,
anchor=west,
at={(rrp0.east)},
xshift=1.15cm]
\addplot []
graphics [xmin=0,xmax=2,ymin=0,ymax=1] { 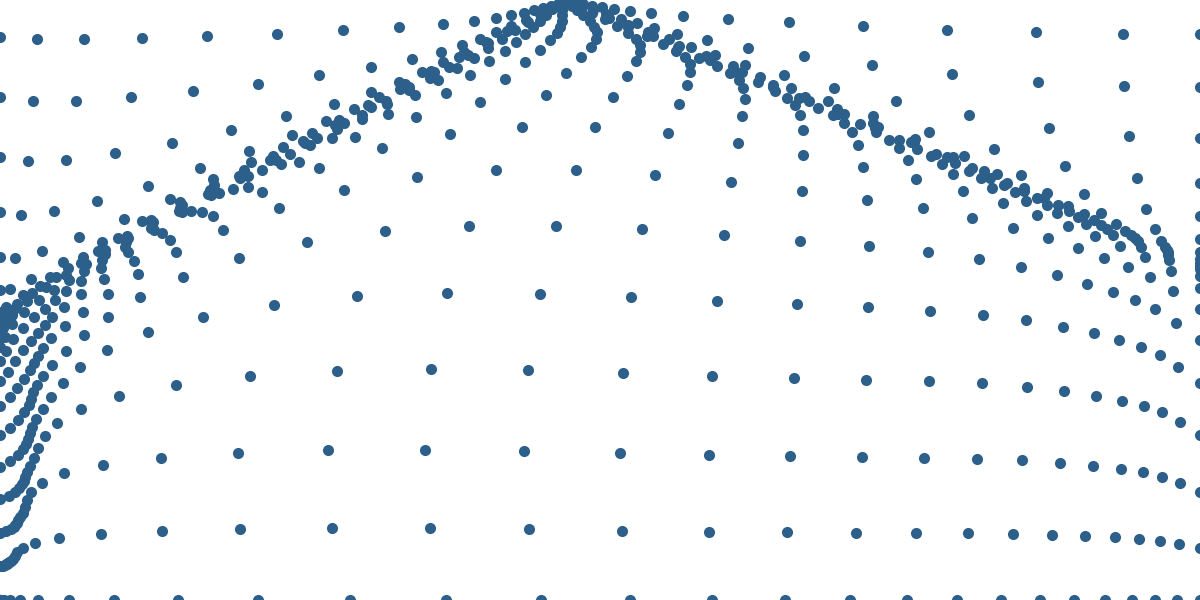};

\end{axis}
\begin{axis}[
axis equal image,
xmin=0,
xmax=2,
ymin=0,
ymax=1,
xlabel=$x$,
ylabel= ,
xtick={0, 0.5, 1, 1.5, 2},
ytick={0, 0.25, 0.5, 0.75, 1},
width=0.30\textwidth,
height=3.0cm,
name=rrp2,
anchor=west,
at={(rrp1.east)},
xshift=1.15cm]
\addplot []
graphics [xmin=0,xmax=2,ymin=0,ymax=1] { 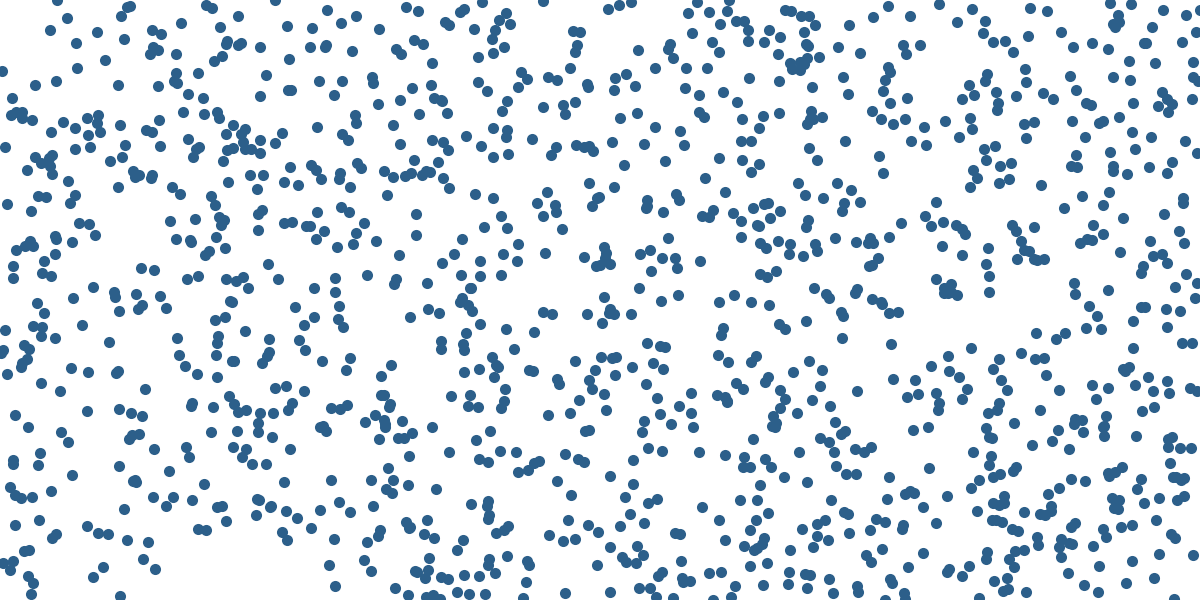};

\end{axis}
\node[anchor=north, font=\small] at ($(rrp0.south) + (0, -1.05cm)$) {(a)};
\node[anchor=north, font=\small] at ($(rrp1.south) + (0, -1.05cm)$) {(b)};
\node[anchor=north, font=\small] at ($(rrp2.south) + (0, -1.05cm)$) {(c)};
\end{tikzpicture}}
  \caption{Regular shock reflection: \textbf{(a)} uniform reference collocation
  grid $\mathcal{X}_c^0$ and \textbf{(b)} its image
  $\mathcal{G}(\mathcal{X}_c^0;\boldsymbol{\tau}_*)$ after joint training;
  \textbf{(c)} the fixed set of $12{,}000$ uniformly random collocation points
  used by the second vanilla PINN. The same fraction of points is shown in
  every panel.}
  \label{fig:rr_points}
\end{figure}

\begin{figure}[h]
  \centering
  \resizebox{0.95\textwidth}{!}{\begin{tikzpicture}
\begin{axis}[
axis equal image,
xmin=0,
xmax=2,
ymin=0,
ymax=1,
xlabel=$x$,
ylabel=$y$,
xtick={0, 1, 2},
ytick={0, 0.5, 1},
width=0.46\textwidth,
height=6cm,
name=rr_pc0]
\addplot []
graphics [xmin=0,xmax=2,ymin=0,ymax=1] { 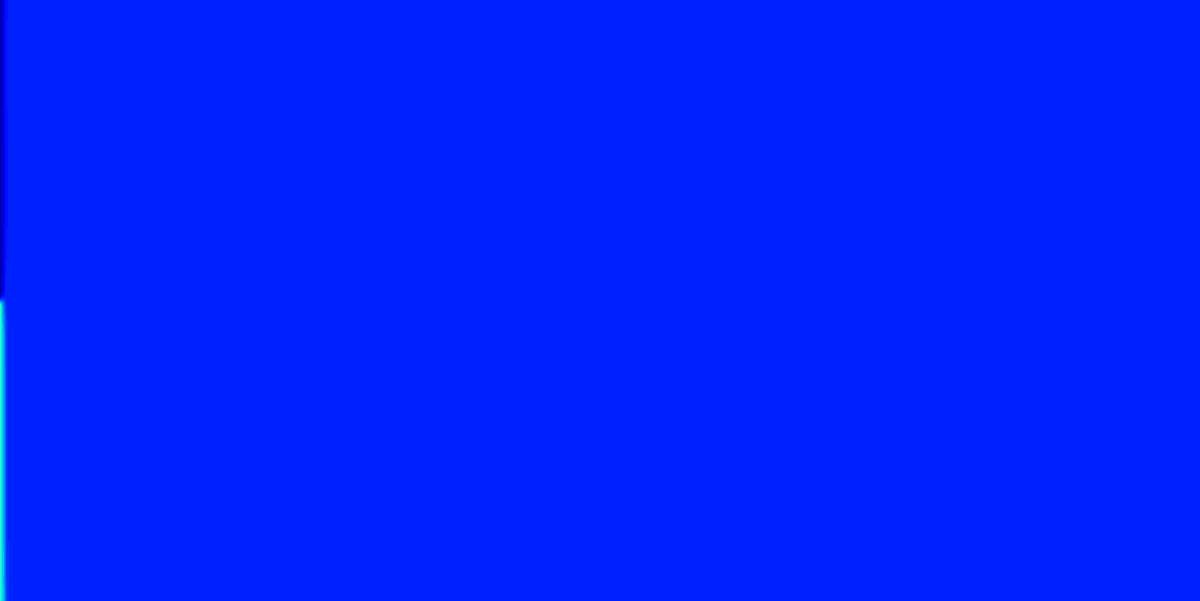};

\end{axis}
\begin{axis}[
axis equal image,
xmin=0,
xmax=2,
ymin=0,
ymax=1,
xlabel=$x$,
ylabel= ,
xtick={0, 1, 2},
ytick={0, 0.5, 1},
width=0.46\textwidth,
height=6cm,
name=rr_pc1,
anchor=west,
at={(rr_pc0.east)},
xshift=0.85cm]
\addplot []
graphics [xmin=0,xmax=2,ymin=0,ymax=1] { 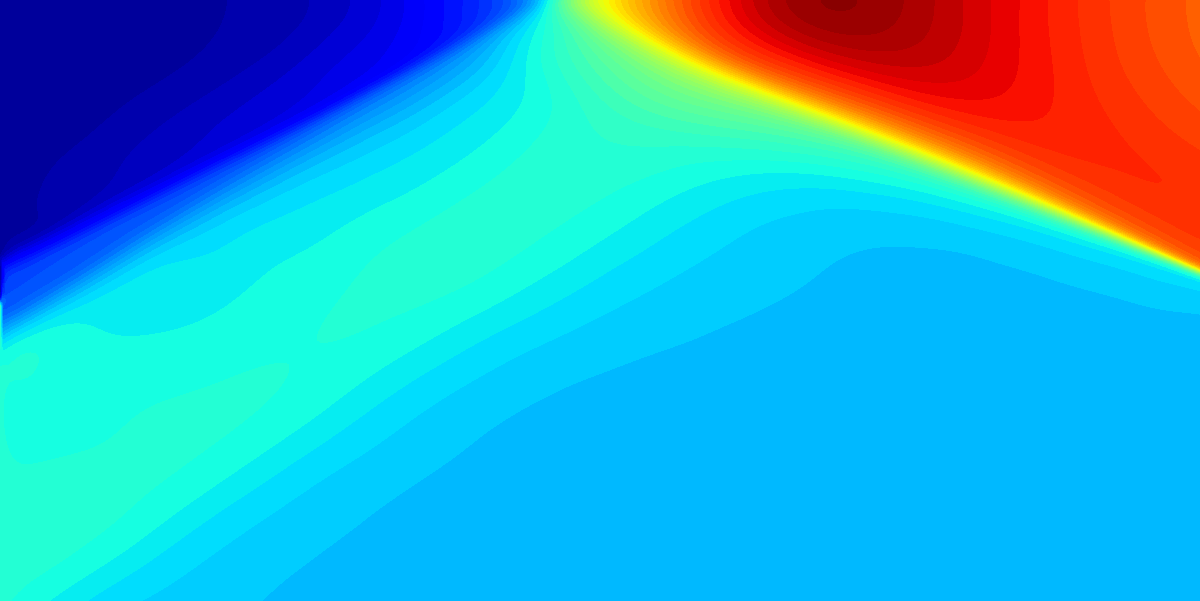};

\end{axis}
\begin{axis}[
axis equal image,
xmin=0,
xmax=2,
ymin=0,
ymax=1,
xlabel=$x$,
ylabel=$y$,
xtick={0, 1, 2},
ytick={0, 0.5, 1},
width=0.46\textwidth,
height=6cm,
name=rr_pc2,
anchor=north,
at={(rr_pc0.south)},
yshift=-2.1cm]
\addplot []
graphics [xmin=0,xmax=2,ymin=0,ymax=1] { 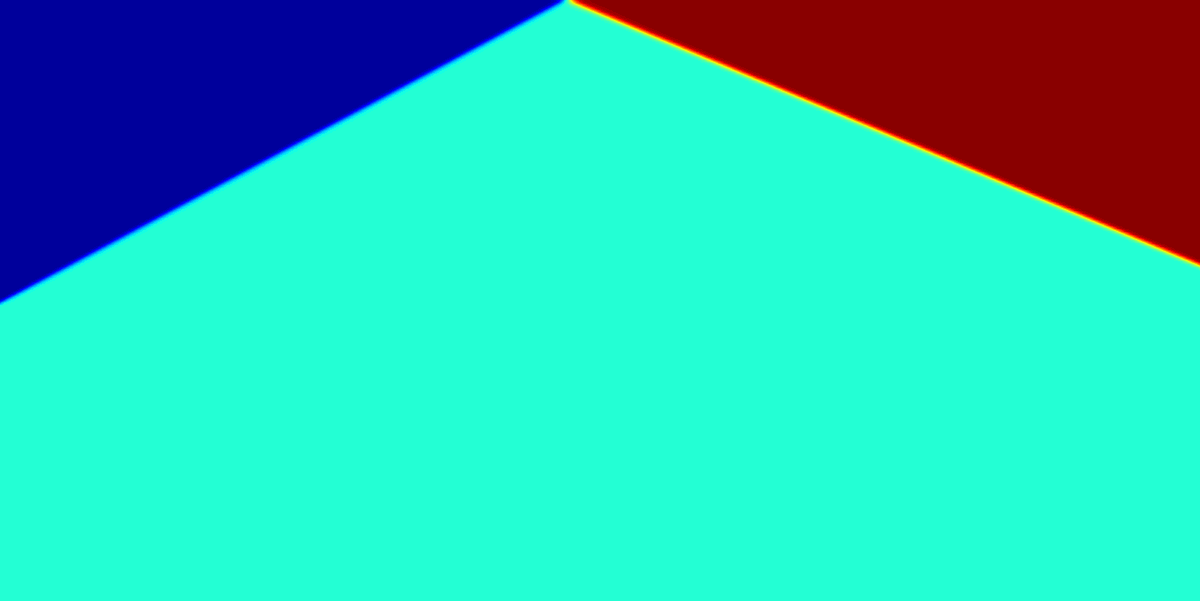};

\end{axis}
\begin{axis}[
axis equal image,
xmin=0,
xmax=2,
ymin=0,
ymax=1,
xlabel=$x$,
ylabel= ,
xtick={0, 1, 2},
ytick={0, 0.5, 1},
width=0.46\textwidth,
height=6cm,
name=rr_pc3,
anchor=west,
at={(rr_pc2.east)},
xshift=0.85cm]
\addplot []
graphics [xmin=0,xmax=2,ymin=0,ymax=1] { 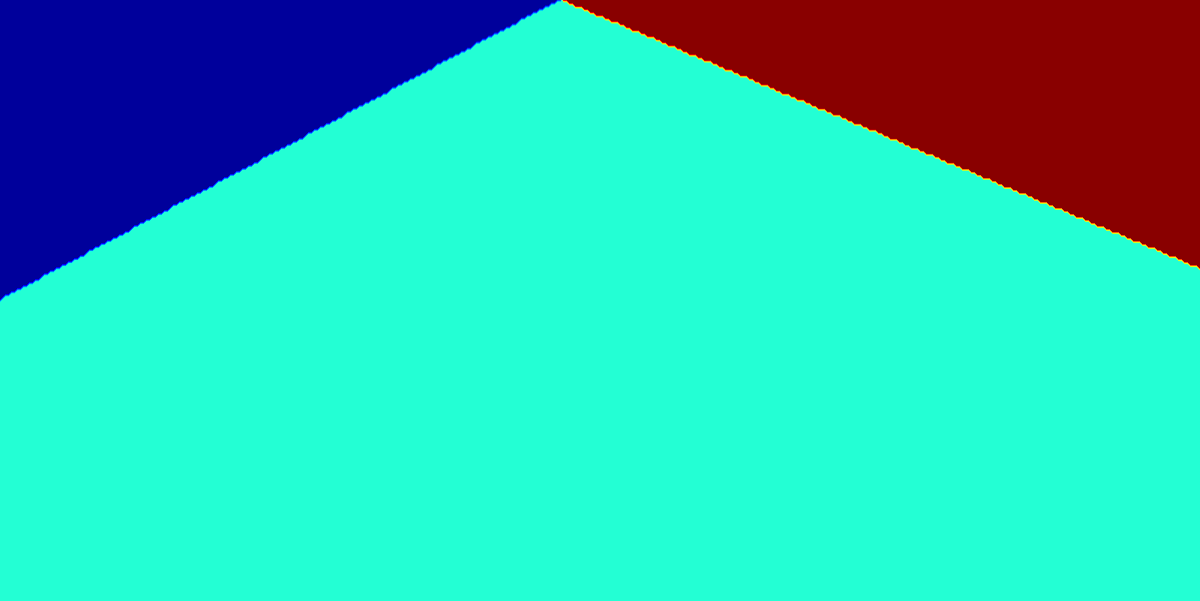};

\end{axis}
\node[anchor=north, font=\small] at ($(rr_pc0.south) + (0, -1.05cm)$) {(a)};
\node[anchor=north, font=\small] at ($(rr_pc1.south) + (0, -1.05cm)$) {(b)};
\node[anchor=north, font=\small] at ($(rr_pc2.south) + (0, -1.05cm)$) {(c)};
\node[anchor=north, font=\small] at ($(rr_pc3.south) + (0, -1.05cm)$) {(d)};
\begin{axis}[
hide axis, scale only axis,
xmin=0, xmax=1, ymin=0, ymax=1,
height=0pt, width=0pt,
colormap/jet,
colorbar horizontal,
point meta min=0.9449, point meta max=3.702,
colorbar style={
  width=0.62\textwidth, height=0.25cm,
  xtick={0.94, 1.63, 2.32, 3.01, 3.70},
  ticklabel style={font=\small},
  at={($(rr_pc2.south)!0.5!(rr_pc3.south)$)}, anchor=north, yshift=-1.75cm
}]
\addplot [draw=none] coordinates {(0,0)};
\end{axis}

\end{tikzpicture}}
  \caption{Regular shock reflection: pressure contours for \textbf{(a)} the
  vanilla PINN on the uniform $98\times58$ grid, \textbf{(b)} the vanilla
  PINN on $12{,}000$ fixed random collocation points, \textbf{(c)} FT-PINN,
  and \textbf{(d)} the analytical solution (\ref{app:rr}).}
  \label{fig:rr_p_contours}
\end{figure}

\begin{figure}[h]
  \centering
  \resizebox{0.98\textwidth}{!}{\input{_py/rr_slices.tikz}}
  \caption{Regular shock reflection: vertical slices for $P$ in the top row and
  $\rho$ in the bottom row. Columns are \textbf{(a)} $x_1=0.4$ and
  \textbf{(b)} $x_1=0.7$, both crossing the incident leg, and \textbf{(c)}
  $x_1=1.2$ and \textbf{(d)} $x_1=1.6$, both crossing the reflected leg,
  comparing the analytical solution (\ref{line:rr_exact}), the vanilla PINN
  on the uniform grid (\ref{line:rr_A}), the vanilla PINN on $12{,}000$ fixed
  random collocation points (\ref{line:rr_A12k}), and FT-PINN
  (\ref{line:rr_B}).}
  \label{fig:rr_slices}
\end{figure}

\section{Conclusion}
\label{sec:conclude}
% !TEX root = main.tex
We have presented a feature-tracking PINN for conservation laws whose
solutions contain shocks. The solution network is defined on a fixed reference
domain and composed with a deformation map represented as a mixture of radial
basis functions, whose centers, bandwidths, and vector weights are trained
jointly with the network weights against the pulled-back conservation-law
residual. A uniform reference collocation grid thereby induces a physical
collocation distribution that concentrates on the feature, with no a priori
knowledge of its location, no residual-based resampling, and no per-point
optimization; the entire evolution of the collocation set is encoded by a few
hundred parameters of the deformation map. Unlike the analytical maps of our
earlier work~\cite{THAKUR2026110419}, the RBF-based deformation can represent features of arbitrary geometry.
However, invertibility and boundary preservation, which are properties of the
analytical family, are handled by a one-sided fold barrier on $\det G$, which
penalizes folding without opposing the metric collapse that produces the
clustering, and by the tangential projector $I-\mathbf{n}\mathbf{n}^\top$
applied to the raw displacement, which enforces
$\mathcal{G}(\partial\Omega_0)=\partial\Omega_0$ exactly while leaving
boundary points free to slide along the boundary.

We considered four shock-dominated problems, which include space-time viscous Burgers with
two merging shocks, a decelerating Burgers shock, the one-dimensional Euler
shock tube in space-time, and steady two-dimensional Euler regular shock
reflection. In every case, FT-PINN resolved the feature on the same collocation
budget, architecture, and training horizon on which a vanilla PINN failed to do so. In the current work, 
we only considered rectangular domains and two-dimensional problems. In future work, we want to extend the
framework to geometrically complex domains and to three-dimensional problems.

\section*{Acknowledgments}
This work is supported by AFOSR award number FA9550-22-1-0004, ONR award number
N00014-22-1-2299, and NSF award number CBET-2338843.
The content of this publication does not necessarily reflect the position
or policy of any of these supporters, and no official endorsement should
be inferred.

\bibliographystyle{elsarticle-num-names}
\bibliography{biblio}

\appendix

\section{Exact solution of the decelerating-shock Burgers benchmark}
\label{app:burgdec}
In this appendix we derive in closed form the entropy solution of the
decelerating-shock benchmark of Section~\ref{subsec:numexp:burgdec}, together
with the parametric shock path, and verify that the path satisfies the
Rankine--Hugoniot condition identically.

\subsection{Setup}
\label{subsec:app:burgdec:setup}

Consider the inviscid Burgers equation in conservative form,
\begin{equation}
    \partial_t q + \partial_x f(q) = 0,
    \qquad f(q) = \tfrac{1}{2}q^2,
    \qquad (x_1,t) \in [-1,1]\times[0,1],
    \label{eqn:app-burgers}
\end{equation}
with the sharp ($\varepsilon\to0$) limit of the initial
condition~\eqref{eqn:burgdec-ic},
\begin{equation}
    q_0(x_1) =
    \begin{cases}
        2\,(x_1+1)^2, & -1 \le x_1 < 0,\\
        0, & 0 \le x_1 \le 1,
    \end{cases}
    \label{eqn:app-ic}
\end{equation}
a quadratic ramp increasing from $0$ at $x_1=-1$ to $2$ at $x_1=0^-$, followed by a
downward jump to $0$. Since $q=0$ at both endpoints for all times considered,
the flux $f(q)=q^2/2$ vanishes on the spatial boundary and the total mass
\begin{equation}
    M \;=\; \int_{-1}^{1} q(x_1,t)\,\mathrm{d}x_1
    \;=\; \int_{-1}^{0} 2\,(x_1+1)^2\,\mathrm{d}x_1
    \;=\; \frac{2}{3}
    \label{eqn:app-mass}
\end{equation}
is conserved.

\subsection{Characteristics and the smooth branch}
\label{subsec:app:burgdec:char}

\begin{lemma}[Characteristic structure]
\label{lem:app-char}
For each $\xi \in [-1,1]$, let the foot $(\xi,0)$ denote the point in
the initial plane from which a characteristic emanates. The characteristic
through the foot $(\xi,0)$ is the straight line
$x_1 = \xi + q_0(\xi)\,t$, along which $q \equiv q_0(\xi)$. For $t \ge 0$,
characteristics with feet $\xi \in (-1,0)$ do not intersect one another, and
characteristics with feet $\xi \in (0,1)$ are vertical lines carrying $q=0$.
Consequently, the only multivaluedness originates at the jump $x_1=0$, where a
shock forms immediately at $t=0$.
\end{lemma}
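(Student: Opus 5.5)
The plan is to apply the method of characteristics to the inviscid Burgers equation~\eqref{eqn:app-burgers} with initial data~\eqref{eqn:app-ic}. Write the equation in quasilinear form, $\partial_t q + q\,\partial_{x_1} q = 0$, valid wherever $q$ is $C^1$. Along a curve $x_1(t)$ with $\dot x_1 = q(x_1(t),t)$ we get $\frac{\mathrm{d}}{\mathrm{d}t} q(x_1(t),t) = \partial_t q + q\,\partial_{x_1} q = 0$. Hence $q$ is constant along such a curve, and the curve has constant slope $\dot x_1 = q_0(\xi)$. This gives the straight lines $x_1 = \xi + q_0(\xi)\,t$ carrying $q\equiv q_0(\xi)$. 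For $\xi\in(0,1)$ we have $q_0(\xi)=0$, so these characteristics are the vertical lines $x_1=\xi$ carrying $q=0$, which is the second claim.

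For non-intersection on $(-1,0)$, I will argue via monotonicity of the foot-to-position map $\Phi_t(\xi)=\xi+2(\xi+1)^2 t$. Its derivative is $\Phi_t'(\xi)=1+4(\xi+1)t$, which is at least $1$ for $\xi\in(-1,0)$ and $t\ge 0$. So $\Phi_t$ is strictly increasing on $(-1,0)$ for every $t\ge 0$, and two distinct feet $\xi_1<\xi_2$ never meet. Equivalently, two lines $x_1=\xi_i+q_0(\xi_i)t$ cross only at $t^*=-(\xi_2-\xi_1)/(q_0(\xi_2)-q_0(\xi_1))$. Since $q_0$ is increasing on $(-1,0)$, $t^*<0$, so the crossing lies in the past. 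This is the expansive (rarefying) ramp, and no gradient catastrophe occurs. The characteristics from $\xi=-1$ (vertical, $q=0$) and from the interval $(0,1)$ are likewise mutually disjoint vertical lines.

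The remaining step, which I expect to need the most care, is to show that the only overlap arises from the jump. Characteristics from $(-1,0)$ fill the region $x_1<\Phi_t(0^-)=2t$. Characteristics from $(0,1)$ fill $x_1\in(0,1)$. For any $t>0$ these two regions overlap on $(0,2t)$, so the solution would be multivalued there. Since $\Phi_t$ is injective and the vertical family is trivially injective, every point of multivaluedness comes from one member of each family. All such points lie in the wedge bounded by $x_1=0$ and $x_1=2t$, which emanates from the foot $(0,0)$. The overlap is nonempty for every $t>0$. This is the Lax compressive condition $q_L=2>0=q_R$ at the jump, so a shock must be inserted starting at $t=0$. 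One boundary subtlety remains: the first smooth-branch characteristic leaving the domain through $x_1=1$. I will dismiss it by noting that within $t\le 1$ the ramp characteristics are intercepted by the shock before they reach the right boundary. Only the origin of multivaluedness is claimed here; the shock path itself is left to the subsequent Rankine--Hugoniot analysis.
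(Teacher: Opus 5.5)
Your proposal is correct and follows essentially the same route as the paper: $q$ is constant along straight characteristics, the foot-to-position map $\xi\mapsto\xi+2(\xi+1)^2t$ is strictly monotone because $1+4(\xi+1)t>0$, characteristics right of the jump are vertical, and the compressive jump $q_0(0^-)=2>0$ makes the two families overlap immediately from $(0,0)$. Your explicit overlap wedge $0<x_1<\min(2t,1)$ sharpens the paper's one-line ``overtakes instantly'' step. The closing remark about ramp characteristics reaching $x_1=1$ is not needed for the lemma, and it should simply be dropped, since it rests on the shock path that is derived later from this very lemma.
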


\begin{proof}
Along $\dot x_1 = q$ one has $\partial_t q + q\,\partial_{x_1} q = 0$, so $q$ is constant on
characteristics and each is a straight line of slope $\mathrm{d}x_1/\mathrm{d}t =
q_0(\xi)$. On $(-1,0)$ the map $\xi \mapsto \xi + q_0(\xi)\,t$ has derivative
$1 + q_0'(\xi)\,t = 1 + 4(\xi+1)\,t > 0$ for $t \ge 0$, since $q_0' = 4(\xi+1)
> 0$ there. The map is therefore strictly increasing in $\xi$, hence injective,
and no two left characteristics cross. Right of the jump $q_0 \equiv 0$ gives
vertical characteristics. At $x_1=0$ the left limit $q_0(0^-)=2$ exceeds the
right limit $0$, so the fastest left characteristic overtakes the vertical ones
instantly, and a shock is born at $(x_1,t)=(0,0)$.
\end{proof}

\begin{lemma}[Closed form left of the shock]
\label{lem:app-closed}
Let $x_s(t)$ denote the shock position. For $x_1 < x_s(t)$, the entropy solution
of \eqref{eqn:app-burgers}--\eqref{eqn:app-ic} is, with $s = x_1+1$,
\begin{equation}
    q(x_1,t) \;=\; \frac{4s^2}{4st + 1 + \sqrt{1+8st}},
    \label{eqn:app-exact}
\end{equation}
and $q(x_1,t)=0$ for $x_1 \ge x_s(t)$.
\end{lemma}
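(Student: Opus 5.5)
Lemma~\ref{lem:app-char} already establishes that characteristics with feet $\xi\in(-1,0)$ fill the region left of the shock without crossing, so $q$ there is obtained by inverting the foot map. Fix $(x_1,t)$ with $x_1<x_s(t)$ and write $s=x_1+1$ and $\eta=\xi+1\in(0,1)$. The characteristic relation $x_1=\xi+q_0(\xi)t$ becomes
\[
s=\eta+2\eta^2 t ,
\]
a quadratic in $\eta$. For $t>0$ its nonnegative root is
\[
\eta=\frac{-1+\sqrt{1+8st}}{4t}.
\]
Lemma~\ref{lem:app-char} gives strict monotonicity, so this root is the unique admissible foot. The other root is negative and is discarded. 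Then $q=q_0(\xi)=2\eta^2$.

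Next I would rewrite $2\eta^2$ in the stated form by removing the $t$ from the denominator, which also handles $t=0$. Since $\eta$ is a root of the quadratic, $2\eta^2 t=s-\eta$, so $q=(s-\eta)/t$. Rationalizing $\eta=2s/(1+\sqrt{1+8st})$ gives
\[
q=\frac{s}{t}\cdot\frac{\sqrt{1+8st}-1}{\sqrt{1+8st}+1}.
\]
Multiplying numerator and denominator by $\sqrt{1+8st}+1$ turns this into
\[
q=\frac{8s^2}{\bigl(1+\sqrt{1+8st}\bigr)^2}=\frac{8s^2}{2+8st+2\sqrt{1+8st}},
\]
which is exactly \eqref{eqn:app-exact}. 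At $t=0$ the formula reduces to $4s^2/2=2s^2=q_0$, as it should. A direct check that \eqref{eqn:app-exact} satisfies $\partial_t q+q\partial_{x_1}q=0$ is optional, since it follows from the construction along characteristics.

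Right of the shock, the characteristics are vertical and carry $q=0$ by Lemma~\ref{lem:app-char}, so $q\equiv0$ there. It remains to argue that this piecewise function is the entropy solution with some shock curve $x_s(t)$. The two smooth branches solve the equation classically. The jump must satisfy Rankine--Hugoniot, $\dot x_s=\tfrac12\bigl(q_L+0\bigr)$, which determines $x_s(t)$ as the solution of an ODE starting from $x_s(0)=0$. The Lax condition $q_L>0=q_R$ holds because $q_L=2\eta^2>0$. Uniqueness of entropy solutions (Kružkov) then identifies this function as the entropy solution.

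The main obstacle is making sure the shock path stays inside the region covered by the left characteristics, i.e.\ that the corresponding foot satisfies $\eta\in(0,1]$. I would argue this through mass conservation \eqref{eqn:app-mass}: characteristics are absorbed into the shock, so the foot at the shock decreases from $\eta=1$. The explicit parametrization in $a$ of \eqref{eqn:burgdec-shock}, proven afterwards, confirms this. For the present lemma, existence of the shock curve with $q_L$ given by \eqref{eqn:app-exact} suffices.
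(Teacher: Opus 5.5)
Your proposal is correct and follows the paper's proof. You invert the foot map through the quadratic $2t\eta^2+\eta-s=0$, keep the nonnegative root, and rationalize to reach \eqref{eqn:app-exact}; your step through $q=(s-\eta)/t$ is just an equivalent way of rationalizing the paper's $q=(\sqrt{1+8st}-1)^2/(8t^2)$. The added Rankine--Hugoniot/Lax/Kru\v{z}kov justification, and your deferral of the shock-confinement question to Proposition~\ref{prop:app-shock}, go slightly beyond what the paper writes in this lemma without changing the argument.
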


\begin{proof}
By Lemma~\ref{lem:app-char}, through each point $(x_1,t)$ left of the shock passes
exactly one characteristic. Eliminating its foot $\xi = x_1 - qt$ from the
parametric description yields the implicit relation $q = q_0(x_1-qt)$, i.e.
$q = 2\,(x_1 - qt + 1)^2$. Substituting $q = 2\sigma^2$ with $\sigma = \xi+1 \in
(0,1)$ and $s = x_1+1$ turns the characteristic relation $s = \sigma +
2\sigma^2 t$ into the quadratic
\begin{equation}
    2t\,\sigma^2 + \sigma - s = 0,
    \qquad
    \sigma = \frac{-1 \pm \sqrt{1+8st}}{4t}.
    \label{eqn:app-quad}
\end{equation}
Only the ``$+$'' root is admissible: the ``$-$'' root is negative (a foot
outside the ramp), while the ``$+$'' root lies in $(0,1)$ and satisfies
$\sigma \to s$ as $t \to 0^+$, matching the initial data. Then
$q = 2\sigma^2 = \bigl(\sqrt{1+8st}-1\bigr)^2 / (8t^2)$, and rationalizing via
$\sqrt{1+8st}-1 = 8st\big/\bigl(1+\sqrt{1+8st}\bigr)$ together with
$\bigl(1+\sqrt{1+8st}\bigr)^2 = 2\bigl(1+4st+\sqrt{1+8st}\bigr)$ gives
\eqref{eqn:app-exact}, a form free of cancellation as $t\to0$. Indeed
$q(x_1,0) = 4s^2/2 = 2(x_1+1)^2 = q_0(x_1)$.
\end{proof}

\subsection{The shock curve}
\label{subsec:app:burgdec:shock}

Across a discontinuity with left state $q_L$ and right state $q_R = 0$, the
Rankine--Hugoniot jump condition~\cite{LeVeque_2002} for
\eqref{eqn:app-burgers} with $f(q)=q^2/2$ gives the shock speed
\begin{equation}
    \dot x_s
    = \frac{f(q_L)-f(q_R)}{q_L - q_R}
    = \frac{q_L + q_R}{2}
    = \frac{q_L}{2}.
    \label{eqn:app-rh}
\end{equation}
Since $0 = q_R < \dot x_s = q_L/2 < q_L$, characteristics impinge on the shock
from both sides, so \eqref{eqn:app-rh} is the entropic (Lax-admissible) shock.

\begin{prop}[Parametric shock path]
\label{prop:app-shock}
Parametrize the left characteristics by $a = \xi + 1 \in (0,1]$, so that the
characteristic with foot $\xi = a-1$ carries the value $q_0(\xi) = 2a^2$. The
shock absorbs these characteristics in strictly decreasing order of $a$, and
the characteristic labeled $a$ is absorbed at
\begin{equation}
    t(a) = \frac{1-a^3}{3a^4},
    \qquad
    x_s(a) = (a-1) + \frac{2\,(1-a^3)}{3a^2},
    \label{eqn:app-shockpath}
\end{equation}
which together trace the shock curve; $a=1$ corresponds to $t=0$, $x_s=0$.
\end{prop}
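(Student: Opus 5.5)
The plan is to turn the Rankine--Hugoniot condition~\eqref{eqn:app-rh} into a linear ODE for the absorption time $t$ as a function of the characteristic label $a$, and integrate it in closed form. By Lemma~\ref{lem:app-char}, the state immediately to the left of the shock at any time $t>0$ is carried by a unique left characteristic. Let $a(t)$ be its label. That characteristic is the line $x_1 = (a-1) + 2a^2 t$ carrying $q = 2a^2$, so the shock position must satisfy
\[
  x_s(t) = a(t) - 1 + 2a(t)^2 t,
\]
and the left state is $q_L = 2a(t)^2$. The initial value $a(0)=1$ comes from the fact that the shock is born at $x_1=0$ (Lemma~\ref{lem:app-char}), which corresponds to the foot $\xi=0$.

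Next I impose~\eqref{eqn:app-rh}, which gives $\dot x_s = q_L/2 = a^2$. Differentiating the expression for $x_s$ in $t$ gives
\[
  \dot a\,(1+4at) + 2a^2 = a^2,
  \qquad\text{i.e.}\qquad
  \dot a\,(1+4at) = -a^2 .
\]
Since the right-hand side is strictly negative, $\dot a<0$. Hence $a$ is strictly decreasing and characteristics are absorbed in strictly decreasing order of $a$, which is the ordering claim. This also lets me invert and regard $t$ as a function of $a$ on $(0,1]$, which satisfies the linear ODE
\[
  \frac{\mathrm{d}t}{\mathrm{d}a} + \frac{4}{a}\,t = -\frac{1}{a^2},
  \qquad t(1)=0 .
\]
With integrating factor $a^4$ this becomes $\tfrac{\mathrm{d}}{\mathrm{d}a}(a^4 t) = -a^2$. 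Integrating from $a$ to $1$ gives $a^4 t = (1-a^3)/3$, which is the first formula in~\eqref{eqn:app-shockpath}. Substituting $t(a)$ into $x_s = a-1+2a^2 t$ gives $x_s(a) = (a-1) + 2(1-a^3)/(3a^2)$, which is the second formula. At $a=1$ both formulas vanish, as required.

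The step I expect to need the most care is justifying the ansatz rather than the computation. I must check that the left state is always supplied by a ramp characteristic with $a\in(0,1]$, and that the characteristic through $x_s(t)^-$ varies smoothly in $t$. Smoothness follows from Lemma~\ref{lem:app-closed}: the solution is smooth to the left of the shock, and $\xi\mapsto\xi+q_0(\xi)t$ is a smooth increasing bijection, so $a(t)$ is $C^1$ wherever the shock is. For the range, $t(a)=(1-a^3)/(3a^4)$ is strictly decreasing from $+\infty$ at $a\to0^+$ to $0$ at $a=1$. Hence every $t\ge0$ is attained exactly once, and the shock never exhausts the ramp in finite time.

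As a cross-check I would verify~\eqref{eqn:app-rh} directly from the parametrization: $\frac{\mathrm{d}x_s/\mathrm{d}a}{\mathrm{d}t/\mathrm{d}a}$ should equal $a^2$. Independently, the mass $\int_{-1}^{x_s}q\,\mathrm{d}x_1$ computed from~\eqref{eqn:app-exact} should equal $M=2/3$ from~\eqref{eqn:app-mass}. This confirms consistency with the conservation argument that motivates the proposition.
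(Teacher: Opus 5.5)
Your proof is correct, but it closes the system by a different route than the paper. Both arguments start from the same ansatz, that the shock sits on its newest characteristic, $x_s=(a-1)+2a^2t$.

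\textbf{Paper's route.} The paper closes the system with the \emph{integral} form of the conservation law. Changing variables $x_1=(\sigma-1)+2\sigma^2t$ in the conserved mass gives the algebraic relation $\tfrac23=\tfrac{2a^3}{3}+2ta^4$. From this $t(a)$ is read off directly, and the Rankine--Hugoniot condition is checked only afterwards, by differentiating the path. This is short and needs neither differentiability of the label $a(t)$ nor an initial condition. It does use global information: zero flux at $x_1=\pm1$ and $q\equiv0$ to the right of the shock. It also establishes the absorption order by a separate kinematic comparison of characteristic and shock speeds.

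\textbf{Your route.} You impose the \emph{differential} form, $\dot x_s=a^2$, along the shock. This gives the linear ODE $\mathrm{d}t/\mathrm{d}a+4t/a=-1/a^2$ with $t(1)=0$. You then use the mass identity only as an a posteriori check, which is exactly dual to the paper's verification. Your argument is local to the shock. It also yields the ordering claim as a consequence of $\dot a=-a^2/(1+4at)<0$, rather than as a separate argument.

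\textbf{Tightening the regularity step.} The price of your route is the regularity of $a(t)$, which you rightly flag. As written, your justification implicitly assumes the shock path itself is $C^1$. To make it airtight, note that $x_s$ solves $\dot x_s=\tfrac12\,q(x_s^-,t)$, and the right-hand side is smooth by Lemma~\ref{lem:app-closed}, so $x_s$ is $C^1$. Then $a(t)=\sigma(x_s(t),t)$, where the ``$+$'' root of \eqref{eqn:app-quad} can be written as $\sigma=2s/(1+\sqrt{1+8st})$. This expression is smooth up to and including $t=0$.
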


\begin{proof}
\emph{Ordering.} By the jump condition~\eqref{eqn:app-rh} the shock speed equals half its left
state, so every surviving left characteristic (speed $2a^2$, twice the shock
speed at the moment of its absorption) overtakes the shock; those with larger
$a$ start closer to the shock and move faster, hence arrive earlier.
Absorption therefore proceeds from $a=1$ downward, and at each instant the left
state at the shock is $q_L = 2a^2$, where $a$ is the label currently arriving.

\emph{Shock on its newest characteristic.} At the moment of absorption of
label $a$, the shock point lies on that characteristic:
\begin{equation}
    x_s = (a-1) + 2a^2\,t.
    \label{eqn:app-onchar}
\end{equation}

\emph{Mass closes the system.} At that same moment the smooth region consists
of exactly the characteristics with labels $\sigma \in (0,a)$. Changing
variables from $x_1$ to $\sigma$ via $x_1 = (\sigma-1) + 2\sigma^2 t$, so
$\mathrm{d}x_1 = (1+4\sigma t)\,\mathrm{d}\sigma$, the conserved mass
\eqref{eqn:app-mass} reads
\begin{equation}
    \frac{2}{3}
    = \int_0^a 2\sigma^2\,(1+4\sigma t)\,\mathrm{d}\sigma
    = \frac{2a^3}{3} + 2t\,a^4 .
    \label{eqn:app-massparam}
\end{equation}
Solving \eqref{eqn:app-massparam} for $t$ gives $t(a)$ in
\eqref{eqn:app-shockpath}; substituting into \eqref{eqn:app-onchar} gives
$x_s(a)$.
\end{proof}

Differentiating \eqref{eqn:app-shockpath} gives
$\mathrm{d}x_s/\mathrm{d}t = a^2 = q_L/2$, so the curve satisfies
\eqref{eqn:app-rh} identically, as expected since the mass balance and the
jump condition are integral and differential forms of the same law.

\begin{remark}[Deceleration and confinement]
\label{rem:app-decel}
The shock speed $\dot x_s = a^2$ decreases strictly and monotonically along the
path: the shock is born at $x_s = 0$ with speed $1$ (label $a=1$, left state
$2$) and decelerates as its left state $q_L = 2a^2$ drops. At $t=1$ the
absorbed label solves $3a^4 + a^3 - 1 = 0$, i.e.\ $a \approx 0.688$, giving
$x_s \approx 0.636$ and speed $\approx 0.47$. The shock therefore remains
inside $[-1,1]$ throughout $[0,1]$. Moreover absorption never ceases, since
arriving characteristics always travel at exactly twice the instantaneous shock
speed. Asymptotically $t \sim 1/(3a^4)$ gives $q_L = 2a^2 \sim 2/\sqrt{3t}$,
so the shock slows algebraically without stopping.
\end{remark}

\section{Exact solution of the regular shock reflection benchmark}
\label{app:rr}
In this appendix we derive, by direct computation, the analytical three-state
solution~\eqref{eqn:rr-states} of the regular shock reflection benchmark of
Section~\ref{subsec:numexp:rr}. The configuration is sketched to scale in
Figure~\ref{fig:rr_schematic}: the incident shock enters at the inlet split
point $(0,x_{2,\mathrm{split}})$ at the angle $\beta$ to the freestream,
reaches the wall at $x_{1,\mathrm{imp}}$, and the reflected shock leaves the
impingement point at the angle $\phi_{\mathrm{refl}}$ to the wall. A
streamline is horizontal in state 1, is deflected by $\theta_w$ across the
incident leg, so that in state 2 it is inclined at $+\theta_w$ to the wall and
meets the reflected leg at the angle $\beta_{\mathrm{refl}}=\phi_{\mathrm{refl}}+\theta_w$,
and is turned back to wall-parallel in state 3. The
oblique-shock relations, themselves a consequence of the Rankine--Hugoniot
conditions for the steady Euler equations, are applied first across the
incident leg and then across the reflected leg, with the slip condition on the
top wall closing the system. The gas is ideal with $\gamma=1.4$, states are
reported as primitives $W=(\rho,u,v,P)$, the sound speed is
$a=\sqrt{\gamma P/\rho}$, and, as in Section~\ref{subsec:numexp:rr},
$\theta_w$ denotes the flow deflection angle, subscripted to distinguish it
from the network parameters $\theta$.

\begin{figure}[h]
  \centering
  \resizebox{0.92\textwidth}{!}{% Auto-generated by scripts/plot_rr_schematic.py -- do not edit by hand.
\begin{tikzpicture}[scale=6.5, font=\small,
  shock/.style={line width=1.4pt, black},
  streamline/.style={thin, blue!70!black},
  ref/.style={thin, densely dashed, gray!80!black},
  ang/.style={thin, black},
  state/.style={circle, draw, inner sep=1.2pt, font=\small\bfseries}]

% ---- domain
\draw[gray!60] (0,0) rectangle (2.0,1.0);
\draw[line width=1pt] (0,1.0) -- (2.0,1.0);
\node[above, font=\small] at (1.55,1.0) {slip wall, $x_2=1$};
\node[rotate=90, above, font=\small] at (0,0.22) {inlet, $x_1=0$};
\node[rotate=-90, above, font=\small] at (2.0,0.5) {outflow, $x_1=2$};
\node[below, font=\small] at (1.000,0) {$x_2=0$ (no condition)};
% axes
\draw[-{Stealth}] (-0.14,-0.16) -- (0.12,-0.16) node[right] {$x_1$};
\draw[-{Stealth}] (-0.14,-0.16) -- (-0.14,0.10) node[above] {$x_2$};

% ---- inlet split and impingement point
\node[left, font=\small] at (0.0000,0.5000) {$x_{2,\mathrm{split}}$};
\fill (0.0000,0.5000) circle (0.012);
\fill (0.9353,1.0000) circle (0.012);
\node[above, font=\small, yshift=1pt] at (0.9353,1.0) {$x_{1,\mathrm{imp}}$};

% ---- shocks
\draw[shock] (0.0000,0.5000) -- (0.9353,1.0000) -- (2.0000,0.5536);
\node[rotate=28.13, above, font=\small] at (0.4676,0.7500) {incident shock};
\node[rotate=-22.74, below, font=\small] at (1.5954,0.7233) {reflected shock};

% ---- streamline
\draw[streamline, -{Stealth[length=5pt]}] (0.0000,0.6600) -- (0.1496,0.6600);
\draw[streamline] (0.1496,0.6600) -- (0.2993,0.6600);
\draw[streamline, -{Stealth[length=5pt]}] (0.2993,0.6600) -- (0.8086,0.7498);
\draw[streamline] (0.8086,0.7498) -- (1.3179,0.8396);
\draw[streamline, -{Stealth[length=5pt]}] (1.3179,0.8396) -- (1.6589,0.8396);
\draw[streamline] (1.6589,0.8396) -- (2.0000,0.8396);

% ---- beta at the inlet split point (freestream direction is horizontal)
\draw[ref] (0.0000,0.5000) -- (0.5500,0.5000);
\draw[ang] (0.3000,0.5000) arc[start angle=0.000, end angle=28.130, radius=0.3000];
\node[font=\small] at (0.4074,0.6021) {$\beta$};

% ---- phi_refl at the impingement point (measured from the wall)
\draw[ang] (1.2488,0.8685) arc[start angle=-22.744, end angle=0.000, radius=0.3400];
\node[font=\small] at (1.3862,0.9093) {$\phi_{\mathrm{refl}}$};

% ---- theta_w and beta_refl where the streamline meets the reflected leg:
% dashed = wall-parallel reference through Q; the state-2 streamline arrives
% at +theta_w to it, and the reflected leg makes beta_refl with that stream.
\draw[ref] (1.3179,0.8396) -- (0.9179,0.8396);
\draw[ang] (1.1179,0.8396) arc[start angle=180.000, end angle=190.000, radius=0.2000];
\draw[ang] (1.0412,0.9556) arc[start angle=157.256, end angle=190.000, radius=0.3000];
\node[font=\small] at (0.9608,0.7096) {$\theta_w$};
\node[font=\small] at (0.9559,0.9165) {$\beta_{\mathrm{refl}}$};

% ---- states
\node[state] (s1) at (0.22,0.90) {1};
\node[right=2pt of s1, font=\small] {$W_1,\ M_1=2.90$};
\node[state] (s2) at (0.95,0.30) {2};
\node[right=2pt of s2, font=\small] {$W_2,\ M_2=2.42$};
\node[state] (s3) at (1.60,0.90) {3};
\node[right=2pt of s3, font=\small] {$W_3,\ M_3=2.02$};
\end{tikzpicture}}
  \caption{Regular shock reflection, drawn to scale on $\Omega=[0,2]\times[0,1]$:
  uniform states 1--3 separated by the incident and reflected shocks. The
  incident shock makes the angle $\beta$ with the horizontal freestream and
  impinges on the slip wall at $x_{1,\mathrm{imp}}$. The reflected shock is
  inclined at $\phi_{\mathrm{refl}}$ to the wall. A streamline (blue) is
  deflected by $\theta_w$ across the incident leg, so that it meets the
  reflected leg at $\beta_{\mathrm{refl}}=\phi_{\mathrm{refl}}+\theta_w$
  (dashed: wall-parallel reference), and is turned back to wall-parallel
  across it.}
  \label{fig:rr_schematic}
\end{figure}

\subsection{Oblique-shock relations}
\label{subsec:app:rr:jump}

Consider a straight, stationary shock making an angle $\beta$ with an incoming
uniform stream of Mach number $M_1$. The Rankine--Hugoniot conditions across
the shock leave the tangential velocity component continuous and reduce, for
the normal components, to the one-dimensional normal-shock relations driven by
the normal Mach number $M_{n1} = M_1\sin\beta$, giving the density, pressure,
and normal-Mach jumps~\cite{anderson2003modern}
\begin{equation}
    \frac{\rho_2}{\rho_1}
    = \frac{(\gamma+1)M_{n1}^2}{(\gamma-1)M_{n1}^2 + 2},
    \qquad
    \frac{P_2}{P_1}
    = 1 + \frac{2\gamma}{\gamma+1}\bigl(M_{n1}^2 - 1\bigr),
    \qquad
    M_{n2}^2
    = \frac{(\gamma-1)M_{n1}^2 + 2}{2\gamma M_{n1}^2 - (\gamma-1)} .
    \label{eqn:app-rr-jumps}
\end{equation}
Because the normal velocity drops across the shock while the tangential one is
preserved, the velocity vector rotates toward the shock plane. The flow is
deflected by an angle $\theta_w$, and the downstream velocity makes the angle
$\beta-\theta_w$ with the shock, so that the downstream Mach number is
\begin{equation}
    M_2 = \frac{M_{n2}}{\sin(\beta-\theta_w)} .
    \label{eqn:app-rr-m2}
\end{equation}
Deflection and shock angle are tied by the $\theta$--$\beta$--$M$
relation~\cite{anderson2003modern}
\begin{equation}
    \tan\theta_w
    = 2\cot\beta\,
      \frac{M_1^2\sin^2\beta - 1}{M_1^2(\gamma + \cos 2\beta) + 2} .
    \label{eqn:app-rr-tbm}
\end{equation}
At fixed $M_1$, the deflection~\eqref{eqn:app-rr-tbm} viewed as a function of
$\beta$ rises from zero at the Mach angle $\beta=\arcsin(1/M_1)$ to a maximum
$\theta_{w,\max}(M_1)$ and falls back to zero at $\beta=90^\circ$. A
prescribed deflection $\theta_w<\theta_{w,\max}$ is therefore produced by two
shock angles, and attached shocks realize the smaller of the two, the
weak root, taken throughout below. If instead
$\theta_w>\theta_{w,\max}$, no shock angle can produce the required turn and
the shock detaches. This detachment criterion is what bounds the
regular-reflection regime.

\subsection{The incident shock}
\label{subsec:app:rr:incident}

The freestream state is $W_1 = (1,\ 3.431,\ 0,\ 1)$, so the freestream speed is
$V_1 = u_1 = 3.431$, $a_1 = \sqrt{\gamma} = 1.183$, and
$M_1 = V_1/a_1 = 2.90$, and the imposed deflection
is $\theta_w = 10^\circ$. Solving~\eqref{eqn:app-rr-tbm} for the weak root
gives the incident shock angle
\begin{equation}
    \beta = 28.13^\circ,
    \qquad
    M_{n1} = M_1\sin\beta = 1.367 .
    \label{eqn:app-rr-beta1}
\end{equation}
The jump relations~\eqref{eqn:app-rr-jumps} then give
\begin{equation}
    \frac{\rho_2}{\rho_1} = 1.633,
    \qquad
    \frac{P_2}{P_1} = 2.014,
    \qquad
    M_{n2} = 0.754,
    \qquad
    M_2 = \frac{M_{n2}}{\sin(\beta-\theta_w)} = 2.423 .
    \label{eqn:app-rr-state2-ratios}
\end{equation}
With $a_2 = \sqrt{\gamma P_2/\rho_2} = 1.314$ the downstream speed is $V_2 =
M_2 a_2 = 3.184$, directed at $+\theta_w$ to the wall, so
\begin{equation}
    u_2 = V_2\cos\theta_w = 3.136,
    \qquad
    v_2 = V_2\sin\theta_w = 0.553,
    \label{eqn:app-rr-state2}
\end{equation}
which is the state $W_2$ of~\eqref{eqn:rr-states}. Geometrically, the incident
leg emanates from the inlet split point $(0,\,x_{2,\mathrm{split}})=(0,\,0.5)$
at the angle $\beta$ to the freestream and meets the top wall $x_2 = 1$ at
\begin{equation}
    x_{1,\mathrm{imp}}
    = \frac{1 - x_{2,\mathrm{split}}}{\tan\beta}
    = \frac{0.5}{\tan 28.13^\circ}
    = 0.935 .
    \label{eqn:app-rr-imp}
\end{equation}

\subsection{The reflected shock and the three-state field}
\label{subsec:app:rr:reflected}

The state-2 stream is inclined at $+\theta_w$ to the top wall, which is an
inviscid slip wall. The flow must therefore be turned by $\theta_w$ back to
wall-parallel state, and the only steady supersonic mechanism available is a second
oblique shock anchored at the impingement point. Whether an attached solution
exists is decided at the post-incident Mach number. Evaluating the
maximum of~\eqref{eqn:app-rr-tbm} at $M_2 = 2.423$ gives
$\theta_{w,\max}(M_2) = 28.95^\circ$, and since $\theta_w = 10^\circ$ is well
below this detachment value, an attached weak shock exists and the reflection
is regular.

Applying~\eqref{eqn:app-rr-tbm} a second time at $(M_2,\theta_w)$, the weak
root is
\begin{equation}
    \beta_{\mathrm{refl}} = 32.74^\circ,
    \label{eqn:app-rr-beta2}
\end{equation}
measured from the state-2 flow direction. Since that direction is itself
inclined at $+\theta_w$ to the wall, the reflected leg is inclined at
\begin{equation}
    \phi_{\mathrm{refl}} = \beta_{\mathrm{refl}} - \theta_w = 22.74^\circ
    \label{eqn:app-rr-phi}
\end{equation}
to the wall. Note that $\phi_{\mathrm{refl}} \ne \beta$. The reflection is not
specular, because the incident and reflected shocks operate at different
upstream Mach numbers ($M_1=2.90$ versus $M_2=2.423$).

The jump relations~\eqref{eqn:app-rr-jumps}, now driven by the normal Mach
number $M_2\sin\beta_{\mathrm{refl}} = 1.310$, give
\begin{equation}
    \frac{\rho_3}{\rho_2} = 1.534,
    \qquad
    \frac{P_3}{P_2} = 1.837,
    \qquad
    M_3 = 2.019,
    \label{eqn:app-rr-state3-ratios}
\end{equation}
hence $\rho_3 = 2.505$ and $P_3 = 3.700$. With $a_3 = \sqrt{\gamma
P_3/\rho_3} = 1.438$ the downstream speed is $V_3 = M_3 a_3 = 2.904$, and the
turn restores wall-parallel flow,
\begin{equation}
    u_3 = V_3 = 2.904,
    \qquad
    v_3 = 0,
    \label{eqn:app-rr-state3}
\end{equation}
which is the state $W_3$ of~\eqref{eqn:rr-states}; $v_3=0$ confirms that the
slip condition on the top wall is met exactly.

\end{document}